\newtheorem{theorem}{Theorem}[section]
\newtheorem{lemma}[theorem]{Lemma}
\newtheorem{corollary}[theorem]{Corollary}
\newtheorem{remark}[theorem]{Remark}
\newcommand{\R}{{\mathord{\mathbb R}}}
\newcommand{\Sp}{{\mathord{\mathbb S}}}
\newcommand{\Z}{{\mathord{\mathbb Z}}}
\newcommand{\N}{{\mathord{\mathbb N}}}
\newcommand{\XXint}[3]{\setbox0=\hbox{$#1{#2#3}{\int}$ }
\vcenter{\hbox{$#2#3$ }}\kern-.6\wd0}
\begin{document}

\title{Entropy decay for the Kac evolution}
\author[F. Bonetto, A. Geisinger, M. Loss, and T. Ried]{}
\subjclass[2010]{Primary: 82C22; Secondary: 60J25.}
\keywords{Kac model, Entropy decay, Thermostat, Correlation Inequalities, Maxwellian molecules}
%%%%%%%%%%%%%%%%%%%%%%%%%%%%%%%%%%%%%%%%%%%%%%%%%%
%%%%%%%%%%%%%%%%%%%%%%%%%%%%%%%%%%%%%%%%%%%%%%%%%%
%%%%%%%%%%%%%%%%%%%%%%%%%%%%%%%%%%%%%%%%%%%%%%%%%%
\email{{bonetto@math.gatech.edu}}
\email{{alissa.geisinger@uni-tuebingen.de}}
\email{{loss@math.gatech.edu}}
\email{{tobias.ried@kit.edu}}
%%%%%%%%%%%%%%%%%%%%%%%%%%%%%%%%%%%%%%%%%%%%%%%%%%
%%%%%%%%%%%%%%%%%%%%%%%%%%%%%%%%%%%%%%%%%%%%%%%%%%
%%%%%%%%%%%%%%%%%%%%%%%%%%%%%%%%%%%%%%%%%%%%%%%%%%
%\thanks{}
\thanks{\copyright~2017 by the authors. Reproduction of this article by any 
means permitted for non-commercial purposes} 
%%%%%%%%%%%%%%%%%%%%%%%%%%%%%%%%%%%%%%%%%%%%%%%%%%%%%%%%%%%%%%%%%%%%%
%%%%%%%%%%%%%%%%%%%%%%%%%%%%%%%%%%%%%%%%%%%%%%%%%%%%%%%%%%%%%%%%%%%%%
%%%%%%%%%%%%%%%%%%%%%%%%%%%%%%%%%%%%%%%%%%%%%%%%%%%%%%%%%%%%%%%%%%%%%
%%%%%%%%%%%%%%%%%%%%%%%%%%%%%%%%%%%%%%%%%%%%%%%%%%%%%%%%%%%%%%%%%%%%%
\maketitle

\centerline{\scshape Federico Bonetto}
\smallskip
{\footnotesize
 \centerline{School of Mathematics, Georgia Institute of Technology}
 \centerline{Atlanta, GA 30332, United States of America}
}

\medskip

\centerline{\scshape Alissa Geisinger}
\smallskip
{\footnotesize
 \centerline{Universit\"at T\"ubingen, Fachbereich Mathematik}
 \centerline{Auf der Morgenstelle 10, 72076 T\"ubingen, Germany}
}

\medskip

\centerline{\scshape Michael Loss}
\smallskip
{\footnotesize
 \centerline{School of Mathematics, Georgia Institute of Technology}
 \centerline{Atlanta, GA 30332, United States of America}
}

\medskip

\centerline{\scshape Tobias Ried}
\smallskip
{\footnotesize
 \centerline{Institute for Analysis, Karlsruhe Institute of Technology (KIT)}
 \centerline{Englerstra{\ss}e 2, 76131 Karlsruhe, Germany}
}

\begin{abstract} 
We consider solutions to the Kac master equation for initial conditions where 
$N$ particles are in a thermal equilibrium and $M\le N$ particles are out of 
equilibrium. We show that such solutions have exponential decay in entropy 
relative to the thermal state. More precisely, the decay is exponential in time 
with an explicit rate that is essentially independent on the particle number. 
This is in marked contrast to previous results which show that the entropy 
production for arbitrary initial conditions is inversely proportional to the 
particle number. The proof relies on Nelson's hypercontractive estimate and the 
geometric form of the Brascamp-Lieb inequalities due to Franck Barthe. Similar 
results hold for the Kac-Boltzmann equation with uniform scattering cross 
sections. 

\end{abstract}

\section{\bf Introduction}

Among the models describing a gas of interacting particles, the Kac master 
equation \cite{kac}, due to its simplicity, occupies a special place. It is 
useful in illuminating various issues in kinetic theory, e.g., providing a 
reasonably satisfactory derivation of the spatially homogeneous Boltzmann 
equation and giving a  mathematical framework for investigating the approach to 
equilibrium. These issues were, in fact, the motivation for Kac's original work 
\cite{kac}. Although it does not have a foundation in Hamiltonian mechanics, the 
Kac master equation is based on simple probabilistic principles and yields a 
linear evolution equation for the velocity distribution for $N$ particles 
undergoing collisions. It is in this context that Kac invented the notion of 
propagation of chaos and he used this notion to derive the spatially 
homogeneous, non-linear Kac-Boltzmann equation. The approach through master 
equations led Kac to formulate the notion of approach to equilibrium and 
suggested various avenues to investigate this problem as the number of 
particles, $N$, becomes large. He emphasized that this could be done in a 
quantitative way if one could show, e.g., that the gap of the generator is 
bounded below uniformly in $N$.  This, known as Kac's conjecture \cite{kac}, was 
proved by \'Elise Janvresse in \cite{Jeanvresse} and, as a further sign of the 
simplicity of the model,  the gap was computed explicitly in \cite{CCL1, CCL2}, 
see also \cite{Maslen}.  One of the problems in using the gap is that the 
approach to equilibrium is measured in terms of an $L^2$ distance. While this 
does seem to be a natural  way to look at this problem, the size of the  $L^2$ 
norm of approximately independent probability distributions increases 
exponentially with the size of the system. Thus, the half life of the $L^2$ norm 
is of order $N$. 

A natural measure is, of course, given by the entropy, which is extensive, i.e, 
proportional to $N$.  There has not been much success in proving exponential
decay of the entropy with good rates.  In \cite{villani} Cedric Villani showed that 
the entropy decays exponentially, albeit with a rate that is bounded below by a 
quantity that is inversely proportional to $N$. This estimate was complemented 
by Amit Einav \cite{amit}, who gave an example of a state that has entropy 
production essentially of order $1/N$. His example is the initial state in which 
most of the energy is concentrated in a few particles while most of the others 
have very little energy. One might surmise, based on physical intuition, that 
this state is physically very improbable and still has low entropy production 
because most of the particles are in some sort of equilibrium. This intuition 
can be made rigorous, see \cite{amit}, although by a quite difficult 
computation. One should add that low entropy production does not preclude 
exponential decay in entropy, i.e., large entropy production for the initial 
state might not be necessary for an exponential decay rate for the entropy.  

A breakthrough was achieved by Mischler and Mouhot  in \cite{MischlerMouhot1, 
MischlerMouhot2}. They undertook a general investigation of the Kac program for 
gases of hard spheres and true Maxwellian molecules in three dimensions. Among 
the results of Mischler and Mouhot  is a proof that these systems relax towards 
equilibrium in relative entropy as well as in Wasserstein distance with a rate 
that is independent of the particle number. As expected, they achieve this not 
for any initial condition, but rather for a natural class of chaotic states. The 
rate of relaxation is, however, polynomial in time. 

To summarize, there is so far no mathematical evidence that the entropy in the 
Kac model in general decays exponentially with a rate that is independent of $N$ 
and physical intuition suggests that for highly ``improbable'' states, such as 
the one used by Einav, this cannot be expected. One can restrict the class of 
initial conditions by considering chaotic states as done by Mischler and Mouhot, 
which shifts the problem of finding suitable initial conditions for proving 
exponential decay to the level of the non-linear Boltzmann equation. 

In this paper we take a different approach, one which is based on the idea of 
coupling a system of particles to a reservoir. Recall from \cite{BLV} the master 
equation of $M$ particles with velocities $\mathbf v = (v_1, v_2, \dots, v_M)$ 
interacting with a thermostat at temperature $1/\beta$,
\begin{align}\label{eq: gent}
 \frac{\partial f}{\partial t} = \mathcal L_Tf\ , \ f(\mathbf v, 0) = f_0(\mathbf v) \ .
\end{align}
The operator $\mathcal{L}_T$ is given by
\begin{align*}
\mathcal L_Tf =\mu \sum_{j=1}^{M}{(B_j-I)}f \ ,
\end{align*}
where
\begin{align*}
B_j[f](\mathbf v) :&= \int_\R \mathrm{d}w  \, \frac{1}{2\pi}  
\int_{-\pi}^{ \pi} \mathrm{d}\theta \, \sqrt{\frac \beta {2\pi}} \,
e^{-\beta  w_j^{*}(\theta)^2/2} f(\mathbf v_j(\theta,w)) \ , \\
\vphantom{\int} \mathbf v_j(\theta,w) &= (v_1,..., v_j \cos{(\theta)} + w 
\sin{(\theta)},...,v_M)  \ \text{ and }
 w_j^*(\theta) = -v_j \sin{(\theta)} + w \cos{(\theta)} \ .
\end{align*}
Thus, $B_j[f](\mathbf v)$ describes the effect of a collision between particle 
$j$ in the system and a particle in the reservoir. After the collision, the 
particle from the thermostat is discarded, which ensures that the thermostat 
stays in equilibrium. The interaction times with the thermostat are given by a 
Poisson process whose intensity $\mu$ is chosen so that the average time between 
two successive interactions of a given particle with the thermostat is 
independent of the number of particles in the system. For the case where 
$\rho(\theta) = (2\pi)^{-1}$, the entropy decays exponentially fast. In fact, 
abbreviating $\sqrt{\beta/(2\pi)}e^{-\beta/2\mathbf{v}^2} 
=\Gamma_{\beta}(\mathbf{v})$, we know from \cite{BLV}, that 
\begin{equation*}
S(f(\cdot, t)) := \int_{\R^M} f(\mathbf v, t) \log\left(\frac{f(\mathbf v, 
t)}{\Gamma_\beta (\mathbf v)} \right) 
d \mathbf v \le e^{- \mu t/2} S(f_0) \ .
\end{equation*}
Thus, one might guess that if a ``small" {\it system} of $M$ particles out of 
equilibrium interacts with a  {\it reservoir}, that is a large system of $N \ge 
M$ particles in thermal equilibrium, then the entropy decays exponentially fast 
in time. This intuition is also supported by the results in \cite{BLTV}. There 
it was shown that if the thermostat is replaced by a large but finite reservoir 
{\it initially} in thermal equilibrium, this evolution is close to the evolution 
given by the thermostat. This results holds in various norms and, in particular, 
it is {\it uniform} in time. We would like to emphasize that the reservoir will 
not stay in thermal equilibrium as time progresses, nevertheless it will not 
veer far from it.

Since this is the model that we consider in this work, we will now describe it 
in detail. We consider probability distributions $F:\R^{M+N} \rightarrow \R_+$ 
and write $F(\mathbf v, \mathbf w)$ where $\mathbf v = (v_1, \dots, v_M)$ 
describes the particles in the small system, whereas $ \mathbf w = (w_{M+1}, 
\dots, w_{N+M})$ describes the particles in the large system. The Kac master 
equation is given by
\begin{equation} \label{eq: masterequation}
\frac{\partial F}{\partial t} = \mathcal L F \ ,  \ F(\mathbf v, \mathbf w, 0) = 
F_0(\mathbf v, \mathbf w) = f_0(\mathbf v) e^{-\pi |\mathbf w|^2} \ , 
\end{equation}
where 
 \begin{equation}\label{eq:model}
\mathcal{L} = \frac{\lambda_S}{M-1} \sum_{1 \leq i<j \leq M} \left( R_{ij} - I 
\right) + \frac{\lambda_R}{N-1} \sum_{M < i<j \leq N + M} \left( R_{ij} - I 
\right) + \frac{\mu}{N} \sum_{i=1}^{M} \sum_{j=M+1}^{M+N} \left( R_{ij} - I 
\right) \ , 
\end{equation}
and $R_{ij}$ is given as follows. For $1 \le i<j \le M$ we have
\begin{align*}
(R_{ij}F)(\mathbf v, \mathbf w) = \int_{-\pi}^\pi  \rho(\theta) \, \mathrm{d} 
\theta \, F(r_{ij}(\theta)^{-1}(\mathbf v, \mathbf w)) \ , 
\end{align*}
where
\begin{align}\label{eq:R}
r_{ij}(\theta)^{-1}(\mathbf v, \mathbf w) = (v_1, \dots, v_i \cos \theta - v_j 
\sin \theta, \dots, v_i \sin \theta  + v_j \cos\theta , \dots, v_M, \mathbf w) 
\ . 
\end{align}
The other $R_{ij}$s are defined analogously. We assume that the probability 
measure $\rho$ is smooth and satisfies  
\begin{align} \label{assumprho}
\int_{-\pi}^\pi \rho(\theta) \, \mathrm{d}\theta \, \sin \theta \cos \theta =0 
\ 
.
\end{align}
In particular, we do {\it not} require $\mathcal{L}$ to be self-adjoint on 
$L^2(\mathbb{R}^{N+M})$, a condition called {\it microscopic 
reversibility}.  The initial state of the 
reservoir is assumed to be a thermal equilibrium state and we have chosen
 units in which the inverse temperature $\beta =2\pi$.   
Note that $\lambda_S$ is the rate at which one particle from the system will 
scatter with any other particle in the system and similarly for $\lambda_R$. 
Likewise, $\mu$ is the rate at which a single particle of the system will 
scatter with any particle in the reservoir. The rate at which a particular 
particle from the reservoir will scatter with a particle in the system is given 
by $\mu M/N$. Hence, when $N$ is large compared to $M$ this process is 
suppressed and one expects that the reservoir does not move far from its 
equilibrium.  Indeed, it is shown in \cite{BLTV} that the solution of the master 
equation \eqref{eq:model} stays close to the solution of a thermostated system 
in the Gabetta-Toscani-Wennberg metric, 
\begin{align*}
d_{GTW}(F,G) := \sup_{k \not= 0} \frac{|\widehat F(k) - \widehat G(k)|}{|k|^2} \ ,
\end{align*}
see \cite{GTW}. Here, $\widehat F$ denotes the Fourier transform of $F$.  
More precisely, with the initial conditions \eqref{eq: gent} and \eqref{eq: 
masterequation}, it was shown that 
\begin{align*}
d_{GTW}( f(\mathbf v, t) e^{-\pi |\mathbf w|^2}, F(\mathbf v, \mathbf w, t)) \le 
C(f_0) \frac{M}{N} \ , 
\end{align*}
where $C(f_0)$ is a constant that depends on the initial condition but is of 
order one. The distance varies inversely as $N$, the size of the reservoir and, 
moreover, this estimate holds {\it uniformly} in time. For a detailed 
description of the results we refer the reader to \cite{BLTV}. From this result 
and the fact that the entropy of the system interacting with a thermostat decays 
exponentially in time, one might surmise that the entropy of the system 
interacting with a finite reservoir also decays exponentially fast in time. In 
fact we shall show this to be true if we consider the entropy relative to the 
thermal state. 

\section{\bf Results}

For the solution of the master equation \eqref{eq: masterequation} we use use interchangeably
the notation
\begin{align} \label{eq:expLt}
F(\mathbf v, \mathbf w, t) = (e^{\mathcal L t}F_0)(\mathbf v, \mathbf w).
\end{align}
This evolution preserves the 
energy and hence it suffices to consider it on $L^1(\Sp^{N+M}(\sqrt{N+M}))$ 
with the normalized surface measure. Likewise, it is easy to see that the 
evolution is ergodic on $\Sp^{N+M}(\sqrt{N+M})$ in the sense that $e^{\mathcal 
L t} F_0 \to 1$ as $t \to \infty$ and $1$ is the only normalized equilibrium 
state.

For our purposes it is convenient to consider the evolution in $L^1(\R^{M+N})$ 
with Lebesgue measure. Then $e^{\mathcal L t}F_0$ converges to the spherical 
average of $F_0$ taken over spheres in $\R^{M+N}$. In this space we choose the 
initial condition
\begin{equation} \label{initialcondition}
F_0( \mathbf v, \mathbf w) = f_0(\mathbf v) e^{-\pi |\mathbf w|^2} \ .
\end{equation}
Moreover, we introduce the function $f$,
\begin{equation}  \label{marginal}
f(\mathbf v, t) := \int_{\R^N} \left[e^{\mathcal L t}F_0\right](\mathbf v, 
\mathbf w) \, \mathrm{d} \mathbf w \   
\end{equation}
and we call
\begin{equation*}
S(f(\cdot, t)) := \int_{\R^M} f(\mathbf v, t) \log\left( \frac{f(\mathbf v, 
t)}{e^{-\pi |\mathbf v|^2}} \right) \, \mathrm{d} \mathbf v  \ , 
\end{equation*}
the entropy of $f$ {\it relative} to the thermal state $e^{-\pi |\mathbf v|^2}$.
Our main result is the following theorem.

\begin{theorem} \label{thm: main} Let $N\ge M$
and let $\rho$ be a probability distribution with an absolutely convergent 
Fourier series such that \eqref{assumprho} holds.
The entropy of $f$ {\it relative} of to the thermal state $e^{-\pi |\mathbf 
v|^2}$ then satisfies  
\begin{equation*}
S(f(\cdot, t)) \le \left[\frac{M}{N+M} + \frac{N}{N+M} e^{- t \mu_{\rho} 
(N+M)/N} \right]S(f_0)  \ , 
\end{equation*}
where
\begin{equation*} 
\mu_{\rho} = \mu \int_{-\pi}^\pi  \rho(\theta) \, \mathrm{d} \theta \, 
\sin^2(\theta) \ , 
\end{equation*}
and $f_0$ is as introduced in \eqref{initialcondition}.
\end{theorem}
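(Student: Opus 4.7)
The plan is to lift the problem to the joint density $F$ and bound entropy relative to the full Gaussian $\Gamma(\mathbf v, \mathbf w) := e^{-\pi(|\mathbf v|^2 + |\mathbf w|^2)}$. Since $\Gamma$ is radial and each $R_{ij}$ is a rotation in two coordinates, $R_{ij}(h\Gamma) = \Gamma \cdot R_{ij}h$, so the ratio $h := F/\Gamma$ satisfies $\partial_t h = \mathcal L h$. Because the reservoir starts at equilibrium, $h_0(\mathbf v, \mathbf w) = f_0(\mathbf v)e^{\pi|\mathbf v|^2}$ depends only on the first $M$ coordinates and $\mathrm{Ent}_\Gamma(h_0) = S(f_0)$. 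Setting $\Pi h(\mathbf v) := \int h(\mathbf v, \mathbf w)e^{-\pi|\mathbf w|^2}\,\mathrm d \mathbf w$, one has $f(\mathbf v, t)/e^{-\pi|\mathbf v|^2} = \Pi h(\mathbf v, t)$, and Jensen's inequality for the convex function $x \log x$ gives the data-processing bound $S(f(\cdot, t)) \le \mathcal H(t) := \mathrm{Ent}_\Gamma(h(\cdot, \cdot, t))$. The task reduces to controlling $\mathcal H(t)$.

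Because every $R_{ij}$ preserves $|(\mathbf v, \mathbf w)|^2$, the generator $\mathcal L$ annihilates radial functions. Hence the spherical mean $\bar h(r) := \E_\Gamma[h \mid |(\mathbf v, \mathbf w)| = r]$ is conserved, $\bar h(\cdot, t) = \bar h_0$, and the chain rule for relative entropy under conditional expectations gives
\[
\mathcal H(t) \;=\; \mathrm{Ent}_\Gamma(\bar h_0) + \E_r\bigl[H_r(h(\cdot, \cdot, t))\bigr],
\]
where $H_r$ denotes the relative entropy on $\Sp^{M+N-1}(r)$ with respect to the normalized surface measure. The problem thereby splits into a time-independent piece requiring a Brascamp--Lieb estimate and a time-dependent piece requiring exponential contraction.

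For the Brascamp--Lieb step, the decisive feature is that $h_0$ depends only on $M$ of the $M+N$ coordinates while $\bar h_0$ is the projection onto radial functions. Under $\Gamma$ the ratio $|\mathbf v|^2/|(\mathbf v, \mathbf w)|^2$ is $\mathrm{Beta}(M/2, N/2)$-distributed and independent of $|(\mathbf v, \mathbf w)|^2$. A second-order expansion on the mean-zero quadratic perturbation $\phi = |\mathbf v|^2 - M/(2\pi)$ already yields a sharp contraction ratio of $M/(N+M)$, and Barthe's geometric form of the Brascamp--Lieb inequality upgrades this to the sharp entropy bound $\mathrm{Ent}_\Gamma(\bar h_0) \le \frac{M}{N+M}\,S(f_0)$.

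For the contraction step I would exploit the spectral structure of $\mathcal L$ on Hermite-type modes. Using $\int \rho \sin^2\theta\,\mathrm d \theta = \mu_\rho/\mu$ and the assumption \eqref{assumprho}, a direct calculation shows that the non-radial degree-two combination $|\mathbf v|^2 - (M/N)|\mathbf w|^2$ is an eigenfunction of the system--reservoir part of $\mathcal L$ with eigenvalue $-\mu_\rho(N+M)/N$, and an analogous analysis controls higher Hermite modes. Nelson's hypercontractive inequality for the Ornstein--Uhlenbeck semigroup then transfers this spectral data into an entropy contraction $\E_r[H_r(h(\cdot, \cdot, t))] \le e^{-\mu_\rho(N+M)t/N}\bigl(S(f_0) - \mathrm{Ent}_\Gamma(\bar h_0)\bigr)$; internal collision terms only accelerate the decay. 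Plugging in the Brascamp--Lieb bound yields
\[
\mathcal H(t) \;\le\; \mathrm{Ent}_\Gamma(\bar h_0)\bigl(1 - e^{-\mu_\rho(N+M)t/N}\bigr) + S(f_0)\,e^{-\mu_\rho(N+M)t/N} \;\le\; \left[\frac{M}{N+M} + \frac{N}{N+M}e^{-\mu_\rho(N+M)t/N}\right]S(f_0),
\]
and combining with the data-processing inequality finishes the proof. The hypercontractive step is the main obstacle: the generic log-Sobolev constant for the Kac walk on $\Sp^{M+N-1}$ scales as $1/(N+M)$ (Villani), so the $N$-uniform decay rate can only be extracted by combining Nelson's sharp hypercontractivity with the precise product structure of the system--reservoir operator.
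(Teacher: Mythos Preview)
Your decomposition $\mathcal H(t)=\mathrm{Ent}_\Gamma(\bar h_0)+\E_r[H_r(h(\cdot,\cdot,t))]$ is clean, but the contraction step is a genuine gap. The claim
\[
\E_r\bigl[H_r(h(\cdot,\cdot,t))\bigr]\;\le\;e^{-\mu_\rho(N+M)t/N}\bigl(S(f_0)-\mathrm{Ent}_\Gamma(\bar h_0)\bigr)
\]
is precisely $N$-uniform exponential \emph{entropy} decay for the full Kac walk on each sphere, restricted to your class of initial data. Your justification---identifying the eigenvalue of $|\mathbf v|^2-\tfrac{M}{N}|\mathbf w|^2$ and then invoking Nelson---does not establish this. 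Spectral data on degree-two polynomials controls $L^2$ decay, not entropy; and Nelson's hypercontractivity is a statement about the Ornstein--Uhlenbeck semigroup on $\R^n$ with Gaussian measure, not about the Kac jump process on $\Sp^{M+N-1}$. There is no mechanism in your outline for transferring one to the other, and you yourself note that the Kac log-Sobolev constant is $O(1/(N+M))$. Saying ``the precise product structure of the system--reservoir operator'' rescues the rate is restating the problem, not solving it.

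The paper's route is structurally different and is designed exactly to make Nelson applicable. It never tries to control $\mathcal H(t)$; it takes the marginal first, expands $e^{\mathcal L t}$ in its Dyson series, and proves a representation formula showing that each $h_k(\mathbf v)$ is a convex average of terms
\[
\int_{\R^M} h_0\!\left(U\bigl(\Gamma\,V^T\mathbf v+(I_M-\Gamma^2)^{1/2}\mathbf w\bigr)\right)e^{-\pi|\mathbf w|^2}\,\mathrm d\mathbf w,
\]
with $\Gamma$ the diagonal of singular values of a random $M\times M$ block. These are honest (matrix-valued) Ornstein--Uhlenbeck smoothings of $h_0$, so Nelson's estimate applies coordinate-by-coordinate and yields a sum over marginals $h_0^\sigma$. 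The geometric Brascamp--Lieb inequality then closes this sum, but only because an algebraic sum rule (computing the averaged $A_kA_k^T$) shows the Brascamp--Lieb datum is geometric with constant $C_{k,M}=\tfrac{M}{N+M}+\tfrac{N}{N+M}(1-\mu_\rho(N+M)/(N\Lambda))^k$. Summing $e^{-\Lambda t}\sum_k\frac{(\Lambda t)^k}{k!}C_{k,M}$ gives the theorem. So in the actual proof, hypercontractivity and Brascamp--Lieb are intertwined term-by-term in the expansion, not separated into a static radial piece and a dynamic spherical piece as in your outline; this intertwining is what manufactures the OU structure on which Nelson can act.
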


\begin{remark} 
\item{1.} Note that the theorem deals with the entropy relative to the thermal 
state and not with respect to the equilibrium state. The entropy relative to the 
equilibrium state tends to zero as $t \to \infty$. We do not know how to adapt 
our proof to this situation nor do we have any evidence that it does indeed tend 
to zero at an exponential rate.  If this were the case, the rate would most 
likely depend on the initial condition.

\item{2.} \label{rem:1.2.2}The decay rate is universal in the sense that it 
only depends on $\mu$ and the distribution $\rho$. The intra-particle 
interactions in the system and in the reservoir do not seem to matter.

\item{3.} The statement of the theorem becomes particularly simple as $N\to 
\infty$. This corresponds to the thermostat problem treated in \cite{BLV} with 
the exact same decay rate. It is known that for the thermostat the decay rate is 
optimal, see \cite{TV}, and hence the decay rate here is optimal as well.

\item{4.} Although we assume that $\rho$ is smooth, our result also holds for  
the case where $\rho$ is a finite sum of Dirac measures. In particular 
Theorem~\ref{thm: main} also holds if $\rho$ is a delta measure that has 
its mass at the angles $\theta = \pm \pi/2$, that is, our result does not depend 
on ergodicity of the evolution. 
\end{remark}

As a consequence of Remark \ref{rem:1.2.2}(2),
one obtains a result for the standard Kac model. Recall that the 
generator of the standard Kac model is given by 
\begin{align*}
\mathcal L_{\rm cl} = \frac{2}{N+M-1} \sum_{1 \le i < j \le N+M} (R_{ij} -I) 
\ . 
\end{align*}
We may arbitrarily split the variables into two groups $(v_1, \dots, v_M)$ and 
$(w_{M+1},  \dots, w_{M+N})$. Splitting the generator accordingly,
\begin{multline*}
\mathcal L_{\mathrm{cl}} = \frac{2}{N+M-1} \sum_{1 \le i < j \le M} (R_{ij} -I) 
+  \frac{2}{N+M-1} \sum_{M+1 \le i < j \le N+M} (R_{ij} -I) \\ 
+  \frac{2}{N+M-1} \sum_{i=1}^M \sum_{j=M+1}^{N+M} (R_{ij} -I)\ ,
\end{multline*}
we see that the standard Kac model can be cast in the from \eqref{eq:model} 
by setting
\begin{align*}
\lambda_S = \frac{2(M-1)}{N+M-1} \ \text{ , } \ \lambda_R =  
\frac{2(N-1)}{N+M-1} \ \text{ and } \ 
\mu = \frac{2N}{N+M-1} \ .
\end{align*}
Hence, we obtain the following Corollary.
\begin{corollary}
Let $N \ge M$ and consider the time evolution defined by $\mathcal L_{\rm cl}$ 
with initial condition \eqref{initialcondition}.
Assume that the function $f_0$ in the initial condition has finite entropy.
The entropy of the function
\begin{equation*}
f(\mathbf v, t) := \int_{\R^N} \left[e^{\mathcal L_{\mathrm{cl}} t}F_0\right](\mathbf v, \mathbf w) \, \mathrm{d} \mathbf w \  
\end{equation*}
{\it relative} to the thermal state $e^{-\pi |\mathbf v|^2}$, satisfies
\begin{equation*}
S(f(\cdot, t))
\le \left[\frac{M}{N+M} + \frac{N}{N+M} e^{- t \mu_{\rho} 2(N+M)/(N+M-1)} \right]S(f_0)  \ ,
\end{equation*}
where 
\begin{equation*} 
\mu_{\rho} = \int_{-\pi}^\pi  \rho(\theta) \, \mathrm{d} \theta \, \sin^2(\theta) \ 
\end{equation*}
and $\rho$ is a probability distribution such that \eqref{assumprho} holds.
\end{corollary}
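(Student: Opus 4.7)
The plan is to reduce the corollary to a direct application of Theorem~\ref{thm: main} by exhibiting $\mathcal{L}_{\mathrm{cl}}$ as an instance of the generator $\mathcal{L}$ in \eqref{eq:model} with the specific parameters already identified in the excerpt just before the corollary is stated.

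First I would verify the splitting of $\mathcal{L}_{\mathrm{cl}}$. The standard Kac generator contains a sum over all unordered pairs $\{i,j\}\subset\{1,\dots,N+M\}$, each with prefactor $\tfrac{2}{N+M-1}$. Split this pair set into the three disjoint families (i) both indices in $\{1,\dots,M\}$, (ii) both in $\{M+1,\dots,M+N\}$, (iii) one in each. Comparing the prefactor of the $(i,j)$-sum in each family with the corresponding sum in \eqref{eq:model} gives precisely the identifications
\begin{equation*}
\lambda_S = \frac{2(M-1)}{N+M-1}, \qquad \lambda_R = \frac{2(N-1)}{N+M-1}, \qquad \mu = \frac{2N}{N+M-1},
\end{equation*}
since $\tfrac{\lambda_S}{M-1}=\tfrac{\lambda_R}{N-1}=\tfrac{\mu}{N}=\tfrac{2}{N+M-1}$. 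The hypotheses of Theorem~\ref{thm: main} (namely $N\ge M$, absolutely convergent Fourier series of $\rho$, and \eqref{assumprho}) are assumed in the corollary; the initial condition is the one in \eqref{initialcondition}.

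Next I would simply apply Theorem~\ref{thm: main} with these parameters. The rate appearing in the theorem is $\mu_\rho (N+M)/N$ with $\mu_\rho = \mu \int_{-\pi}^\pi \rho(\theta)\sin^2\theta\, d\theta$. Substituting the value of $\mu$ and using the notation $\mu_\rho = \int_{-\pi}^\pi \rho(\theta)\sin^2\theta\, d\theta$ from the corollary, one obtains
\begin{equation*}
\mu \cdot \int_{-\pi}^\pi \rho(\theta)\sin^2\theta\, d\theta \cdot \frac{N+M}{N} = \frac{2N}{N+M-1}\cdot \mu_\rho \cdot \frac{N+M}{N} = \frac{2(N+M)}{N+M-1}\, \mu_\rho,
\end{equation*}
which is exactly the exponent appearing in the statement of the corollary. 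The prefactor $\tfrac{M}{N+M} + \tfrac{N}{N+M}e^{-\cdot}$ is unchanged, so the bound claimed for $S(f(\cdot,t))$ follows verbatim from Theorem~\ref{thm: main}.

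There is essentially no obstacle beyond bookkeeping: the content of the corollary lies entirely in Theorem~\ref{thm: main} together with the observation, highlighted in Remark~\ref{rem:1.2.2}(2), that the decay rate depends only on $\mu$ and $\rho$, not on the intra-system rates $\lambda_S,\lambda_R$. The only mild check worth making is that the condition $N\ge M$ needed for Theorem~\ref{thm: main} is preserved under the arbitrary choice of how one labels the two groups of variables in $\mathcal{L}_{\mathrm{cl}}$; this is imposed as a hypothesis of the corollary.
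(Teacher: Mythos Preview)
The proposal is correct and follows exactly the paper's approach: the corollary is obtained by recognizing $\mathcal{L}_{\mathrm{cl}}$ as the special case of \eqref{eq:model} with $\lambda_S=\tfrac{2(M-1)}{N+M-1}$, $\lambda_R=\tfrac{2(N-1)}{N+M-1}$, $\mu=\tfrac{2N}{N+M-1}$, and then invoking Theorem~\ref{thm: main}. Your explicit computation of the resulting exponent is a helpful addition that the paper leaves implicit.
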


On a mathematical level, an efficient way of proving approach to equilibrium is through
a logarithmic Sobolev inequality, which presupposes that the generator of the time evolution is given by
a Dirichlet form. This kind of structure is notably absent in the Kac master equation. We shall see however,
that the logarithmic Sobolev inequality in the form of Nelson's hypercontractive estimate is an important
tool for the proof of Theorem \ref{thm: main}. We will use an iterated version of it, which expresses the result in
terms of marginals of the functions involved. This, coupled with an auxiliary computation and a sharp version of the Brascamp-Lieb  inequalities \cite{BrascampLieb} (see also \cite{Elliott}) will lead to the result.

In our opinion, the main result of this paper is the description of a simple mechanism for obtaining
exponential relaxation towards equilibrium. One can extend the results to three dimensional momentum preserving collisions, however, so far only for a caricature of Maxwellian molecules. To carry this method
over to the case of hard spheres and for true Maxwellian molecules is an open problem.

The plan of the paper is as follows: In Section~\ref{sec: representation} we derive a representation formula for the Kac evolution $\mathrm{e}^{\mathcal{L}t}$ which is reminiscent of the Ornstein-Uhlenbeck process. This allows us to prove an entropy inequality based upon Nelson's hypercontractive estimate in Section~\ref{sec: hypercontr}. In Section~\ref{sec: correlation} we show how the sharp version of the geometric Brascamp-Lieb inequality leads to a correlation inequality for the entropy involving marginals, which in turn proves our main entropy inequality. The fact that our Brascamp-Lieb datum is geometric relies on a sum rule which will be proved in Section~\ref{sec: proofofsumrule}. A short proof of the geometric form of the Brascamp-Lieb inequalities is deferred to Appendix~\ref{sec: bl}, as well as some technical details to ensure its applicability in Appendix~\ref{sec: approximation}. In Section~\ref{sec: boltzmann} we show how our method can be applied to three-dimensional Maxwellian collisions with a very simple angular dependence.

\medskip
\noindent{\bf Acknowledgements}:  A.G. and T.R. would like to thank Georgia Tech for its hospitality. The work of M.L. and A.G. was supported in part by NSF grant DMS- 1600560 and the Humboldt Foundation. A.G. and T.R. gratefully acknowledge financial support by the Deutsche Forschungsgemeinschaft (DFG) through GRK 1838 (A.G.) and CRC 1173 (T.R.). F.B. gratefully acknowledges financial support from the Simons Foundation award number 359963. T.R. thanks the Karlsruhe House of Young Scientists (KHYS) for a Research Travel Grant supporting the stay at Georgia Tech.

%%%%%%%%%%%Section2

\section{\bf The representation formula} \label{sec: representation}

The aim of this section is to rewrite \eqref{eq:expLt}, that is 
$e^{\mathcal{L}t}F_0$, in a way which is reminiscent of the Ornstein-Uhlenbeck 
process. This representation will naturally lead to the next step in the proof 
of Theorem \ref{thm: main}, namely the entropy inequality that will be presented 
in Theorem \ref{thm: entropypartial}. 

It is convenient to write
\begin{align*}
\mathcal L = \Lambda(Q-I) \ , \text{ where }  \Lambda = \lambda_S \frac{M}{2} + 
\lambda_R \frac{N}{2} + \mu M \ , 
\end{align*}
and the operator $Q$ is a convex combination of $R_{ij}$s, given by 
\begin{align*}
Q =  \frac{\lambda_S}{\Lambda(M-1)} \sum_{1 \leq i<j \leq M} R_{ij}  + 
\frac{\lambda_R}{\Lambda(N-1)} \sum_{M < i<j \leq N + M} R_{ij}  + 
\frac{\mu}{\Lambda N} \sum_{i=1}^{M} \sum_{j=M+1}^{M+N}  R_{ij} \ , 
\end{align*}
i.e., $Q$ is an average over rotation operators.
The right hand side of \eqref{eq:expLt} can be written as
\begin{align} \label{eq:expansion}
(e^{\mathcal L t}F_0)(\mathbf v, \mathbf w) = e^{-\Lambda t} \sum_{k=0}^\infty 
\frac{t^k \Lambda^k}{k!} Q^kF_0(\mathbf v, \mathbf w) \ ,
\end{align}
where
\begin{equation}\label{series}
Q^kF_0(\mathbf v, \mathbf w) =\sum_{\alpha_1, \dots, \alpha_k} 
\lambda_{\alpha_1} \cdots \lambda_{\alpha_k} \
\int_{[-\pi, \pi]^k} \rho(\theta_1) \, \mathrm{d} \theta_1 \cdots 
\rho(\theta_k) 
\mathrm{d} \theta_k \, F_0\left(\left[\prod_{l=1}^k 
r_{\alpha_l}(\theta_l)\right]^{-1} (\mathbf v, \mathbf w)\right) \ .\
\end{equation}

Here, $\alpha$ labels pairs of particles, that is, $\alpha = (i, j)$, $1 \leq i < j \leq M+N$,
$r_{\alpha}(\theta)$ is defined in \eqref{eq:R} and  $\lambda_{\alpha}$ 
is given by the rotation corresponding to the index
$\alpha$, that is,
\begin{align*}
\lambda_{(i,j)}=\frac{\lambda_S}{\Lambda (M-1)} \ &\text{ if }\ 1 \leq i < 
j\leq M\ , \\ 
\lambda_{(i,j)}=\frac{\lambda_R}{\Lambda (N-1)} \ &\text{ if }\ M+1 \leq i < 
j\leq M+N\ , \\ 
\lambda_{(i,j)}=\frac{\mu }{\Lambda N}    \ &\text{ if }\ 1 \leq i  \leq M\ , 
M+1 \leq j\leq M+N\ .
\end{align*}
Note that the sum over {\it all} pairs $\sum_\alpha \lambda_\alpha~=~1$.

For our purpose, it is convenient to write the function $f_0$, introduced in 
\eqref{initialcondition}, as $f_0(\mathbf v) = h_0( \mathbf v)e^{-\pi |\mathbf 
v|^2}$. Since the Gaussian function is invariant under rotations, 
\eqref{eq:expansion} takes the form \begin{equation*} 
(e^{\mathcal L t}F_0)(\mathbf v, \mathbf w) = 
e^{-\pi\left(|\mathbf v|^2+ |\mathbf w|^2\right)}  e^{-\Lambda t} 
\sum_{k=0}^\infty \frac{t^k \Lambda^k}{k!}Q^k \left( h_0\circ P\right)(\mathbf 
v, \mathbf w) \ . 
\end{equation*}
We introduce the projection $P: \R^{N+M} \rightarrow \R^M$ by $P(\mathbf v, 
\mathbf w) = \mathbf v$, as a reminder that the semigroup $e^{\mathcal L t}$ 
acts on functions that depend on $\mathbf v$ as well as $\mathbf w$.  If we 
write
\begin{equation*}
f(\mathbf v,t) = e^{-\pi |\mathbf v|^2}h(\mathbf v,t)\ ,
\end{equation*}
then \eqref{marginal} can be written as
\begin{equation*} 
h(\mathbf v, t) =  e^{-\Lambda t} \sum_{k=0}^\infty \frac{t^k \Lambda^k}{k!} 
h_k(\mathbf v) \ , 
\end{equation*}
where the functions $h_k$ are given by
\begin{equation*}
h_k(\mathbf v) := \int_{\R^N} Q^k \left( h_0\circ P\right)(\mathbf v, \mathbf 
w)e^{-\pi |\mathbf w|^2} \, \mathrm{d} \mathbf w \ . 
\end{equation*}
Likewise, the entropy of $f$ is expressed as
\begin{equation*}
S(f(\cdot, t)) = \int_{\R^M} h(\mathbf v, t) \log h(\mathbf v, t) e^{-\pi 
|\mathbf v|^2} \, \mathrm{d} \mathbf v =: \mathcal S(h(\cdot, t)) \ . 
\end{equation*}
Expanding the function $Q^k ( h_0\circ P)(\mathbf v, \mathbf w)$, we find that 
\begin{multline} \label{h_k}
h_k(\mathbf v)  =\sum_{\alpha_1, \dots, \alpha_k} \lambda_{\alpha_1} \cdots 
\lambda_{\alpha_k} \int_{[-\pi,\pi]^k} \rho(\theta_1) \, \mathrm{d}\theta_1 
\cdots \rho(\theta_k) \mathrm{d} \theta_k \times \\ \times \int_{\R^N} \left(h_0 
\circ P\right)\left(\left[\prod_{l=1}^k r_{\alpha_l}(\theta_l)\right]^{-1} 
(\mathbf v, \mathbf w)\right) e^{-\pi |\mathbf w|^2} \, \mathrm{d} \mathbf w \ ,
\end{multline} 
where, as before, see \eqref{series}, $r_\alpha(\theta)$ rotates the plane given 
by the index pair $\alpha$ by an angle $\theta$ while keeping the other 
directions fixed. Since $P(\mathbf v, \mathbf w)= \mathbf 
v$, it is natural to write
\begin{equation*} 
\left[\prod_{j=1}^k r_{\alpha_j}(\theta_j)\right]^{-1} =  
\begin{pmatrix} 
A_k(\underline \alpha, \underline \theta)  & B_k(\underline \alpha, \underline 
\theta) \\ C_k(\underline \alpha, \underline \theta) &D_k(\underline \alpha, 
\underline \theta) \end{pmatrix}\ , 
\end{equation*}
where $A_k\in\mathbb{R}^{M\times M}$ is an $M\times M$ matrix, 
$B_k\in\mathbb{R}^{M\times N}$, $C_k\in\mathbb{R}^{N \times M}$ and $D_k\in 
\mathbb{R}^{N\times N}$. Further, $\underline \alpha = (\alpha_1, \dots, 
\alpha_k)$ and $\underline \theta = (\theta_1, \dots, \theta_k)$. This notation 
allows us to rewrite \eqref{h_k} as 
\begin{multline*}
h_k(\mathbf v) =\sum_{\alpha_1, \dots, \alpha_k} \lambda_{\alpha_1} \cdots 
\lambda_{\alpha_k} \int_{[-\pi,\pi]^k} \rho(\theta_1) \, \mathrm{d}\theta_1 
\cdots \rho(\theta_k) \, \mathrm{d} \theta_k \times \\ \times \int_{\R^N} 
h_0\left(A_k(\underline \alpha, \underline \theta) \mathbf v + B_k(\underline 
\alpha, \underline \theta) \mathbf w\right) e^{-\pi |\mathbf w|^2} \, 
\mathrm{d}\mathbf w \ .
\end{multline*} 
Note that, by the definition of rotations,
\begin{equation} \label{rottwo}
A_k(\underline \alpha, \underline \theta)A^T_k(\underline \alpha, \underline 
\theta) + B_k(\underline \alpha, \underline \theta)B^T_k(\underline \alpha, 
\underline \theta) = I_M \ . 
\end{equation} 

\begin{lemma} Let $A\in\mathbb{R}^{M\times M}$ and $B\in\mathbb{R}^{M\times N}$ 
be matrices that satisfy $AA^T+BB^T=I_M$. 
Then 
\begin{align*}
\int_{\R^N} h(A \mathbf v + B \mathbf w) e^{-\pi |\mathbf w|^2} \, \mathrm{d} 
\mathbf w \ =\int_{\R^M} h\left(A \mathbf v + (I_M - AA^T)^{1/2} \mathbf 
u\right) e^{-\pi | \mathbf{u}|^2} \, \mathrm{d} \mathbf u 
\end{align*}
for any integrable function $h$.
\end{lemma}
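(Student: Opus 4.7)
The identity is really a statement about Gaussian random vectors: both sides are the expectation of $h(A\mathbf{v}+X)$ for a centered Gaussian $X$ in $\R^M$, and the hypothesis $BB^T = I_M - AA^T$ forces the covariance of $X$ to be the same on the two sides. My plan is to turn this observation into a proof by carrying out a change of variables dictated by the singular value decomposition of $B$, and then identifying any two square roots of $I_M - AA^T$ up to an orthogonal transformation.

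The concrete steps are as follows. First, I would factor $B=U\Sigma V^T$ with $U\in O(M)$, $V\in O(N)$, and $\Sigma\in\R^{M\times N}$ consisting of the $M$ singular values arranged on the main diagonal of its first $M\times M$ block $D$, zeros elsewhere (this uses $M\le N$, which is our standing assumption). Substituting $\mathbf{w}=V\mathbf{w}'$ leaves $e^{-\pi|\mathbf{w}|^2}\,\mathrm{d}\mathbf{w}$ invariant and turns $B\mathbf{w}$ into $UD\mathbf{w}'_1$, where $\mathbf{w}'_1\in\R^M$ collects the first $M$ components of $\mathbf{w}'$. The integral over the remaining $N-M$ components factors out as $\int_{\R^{N-M}} e^{-\pi|\mathbf{w}'_2|^2}\,\mathrm{d}\mathbf{w}'_2=1$, so the left-hand side becomes
\begin{equation*}
\int_{\R^M} h\bigl(A\mathbf{v}+UD\,\mathbf{u}\bigr)\,e^{-\pi|\mathbf{u}|^2}\,\mathrm{d}\mathbf{u}.
\end{equation*}
A direct computation gives $(UD)(UD)^T = U\Sigma\Sigma^T U^T = BB^T = I_M-AA^T$, so $UD$ is an (in general non-symmetric) square root of $I_M-AA^T$.

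It remains to replace $UD$ by the symmetric square root $(I_M-AA^T)^{1/2}$. Assuming first that $I_M-AA^T$ is invertible, the matrix $O := (UD)^{-1}(I_M-AA^T)^{1/2}$ satisfies $OO^T = I_M$ and is therefore orthogonal; the change of variables $\mathbf{u}=O\mathbf{u}''$ preserves the Gaussian weight and converts $UD\mathbf{u}$ into $(I_M-AA^T)^{1/2}\mathbf{u}''$, producing the right-hand side. The degenerate case, where $I_M-AA^T$ has nontrivial kernel, is handled by a standard approximation: replace $B$ by $B_\varepsilon$ defined on $\R^{N+M}$ with $B_\varepsilon B_\varepsilon^T = (I_M-AA^T)+\varepsilon I_M$ (padding $B$ with $\sqrt{\varepsilon}\,I_M$), apply the invertible case, and pass $\varepsilon\to 0$ using dominated convergence (or observe that the directions in $\ker(I_M-AA^T)$ contribute trivially to both integrals and can simply be dropped).

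The only minor obstacle is making the passage from the arbitrary square root $UD$ to the specific symmetric square root $(I_M-AA^T)^{1/2}$ clean; this is entirely a linear-algebra issue and is resolved by the orthogonal change of variables above. Everything else is bookkeeping with the SVD and the rotation invariance of the standard Gaussian.
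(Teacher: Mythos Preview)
Your argument is correct and rests on the same idea as the paper's---rotation invariance of the Gaussian lets one collapse the $\R^N$ integral to an $\R^M$ integral---but the packaging differs. The paper works directly with the decompositions $\R^N=\ker B\oplus(\ker B)^\perp$ and $\R^M=\operatorname{Ran}B\oplus(\operatorname{Ran}B)^\perp$: it integrates out $\ker B$, builds the partial isometry $R=(BB^T)^{-1/2}B$ from $(\ker B)^\perp$ onto $\operatorname{Ran}B$ to transfer the integral, and then pads back the $(\operatorname{Ran}B)^\perp$ directions; the degenerate case is absorbed automatically by this splitting. Your SVD does the same job more compactly (the rotation by $V$ aligns $\ker B$ with the last $N-M$ coordinates), at the cost of the extra step matching $UD$ to the symmetric square root. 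On that last step you can avoid the case distinction entirely: since $BB^T=UD^2U^T$ with $D\ge 0$, the positive semidefinite square root is exactly $(I_M-AA^T)^{1/2}=UDU^T$, so the orthogonal change $O=U^T$ works directly and no invertibility or $\varepsilon$-approximation is needed.
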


\begin{proof}
Denote the range of $B$ by $H \subset \R^M$ and its kernel by $K \subset \R^N$. 
We may write
\begin{align*}
\int_{\R^N} h(A \mathbf v + B \mathbf w) e^{-\pi |\mathbf w|^2} \, \mathrm{d} 
\mathbf w & = \int_K \int_{K^\perp}  h(A \mathbf v + B \mathbf u) e^{-\pi | 
\mathbf u|^2} e^{-\pi | \mathbf{u}'|^2} \, \mathrm{d} \mathbf u \mathrm{d} 
\mathbf{u}' \\ & = \int_{K^\perp}  h(A \mathbf v + B \mathbf u) e^{-\pi |\mathbf 
u|^2} \, \mathrm{d}\mathbf u  \ .
\end{align*}
The symmetric map $BB^T: \R^M \rightarrow \R^M$ has $H$ as its range and 
$H^\perp$, that is the orthogonal complement of $H$ in $\R^M$, as its kernel. 
Indeed, suppose that there  exists $x \in \R^M$ with $BB^Tx=0$, then $B^Tx = 0$, 
i.e., $x \in {\rm Ker} B^T$ or $x$ is perpendicular to $H$.
Hence, the map $BB^T:H \rightarrow H$ is invertible. Define the linear map $R: 
\R^N \rightarrow H$ by
\begin{align*}
R = \left(BB^T\right)^{-1/2} B
\end{align*}
and note that $RR^T = I_H$ while $R^TR$ projects the space $K^\perp$ 
orthogonally onto $H$. Since $K^\perp$ and $H$ have the same dimension, it 
follows that $R^T$ restricted to $H$ defines an isometry between $H$ and 
$K^\perp$. Hence,
\begin{align*}
\int_{K^\perp}  h\left(A \mathbf v + B \mathbf u\right) e^{-\pi | \mathbf 
u|^2}\, \mathrm{d}\mathbf u & =\int_{K^\perp}  h\left(A \mathbf v + 
\left(BB^T\right)^{1/2}R \mathbf u\right) e^{-\pi |\mathbf u|^2} \, 
\mathrm{d}\mathbf u \\ & = \int_{H}  h\left(A \mathbf v + 
\left(BB^T\right)^{1/2}R R^T \mathbf u\right) e^{-\pi |R^T \mathbf u|^2} \, 
\mathrm{d}\mathbf u \\ & = \int_{H}  h\left(A \mathbf v + 
\left(BB^T\right)^{1/2}\mathbf u\right) e^{-\pi |\mathbf u|^2} \, 
\mathrm{d}\mathbf u \ .
\end{align*}
The assumption $AA^T+BB^T=I_M$, together with the fact that 
\begin{align*}
\int_{H}  h\left(A \mathbf v + \left(BB^T\right)^{1/2}\mathbf u\right) e^{-\pi 
|\mathbf u|^2} \, \mathrm{d}\mathbf u = \int_{H^\perp} \int_{H}  h\left(A 
\mathbf v + (BB^T)^{1/2}\mathbf u\right) e^{-\pi |\mathbf u|^2} \, 
\mathrm{d}\mathbf u \, e^{-\pi |\mathbf u'|^2} \, \mathrm{d} \mathbf u' 
\end{align*}
now implies the lemma. 
\end{proof}

The matrix $A_k(\underline \alpha, \underline \theta) $ has an orthogonal 
singular value decomposition,  
\begin{align} \label{eq:svdA}
A_k(\underline \alpha, \underline \theta) =U_k(\underline \alpha, \underline 
\theta) \Gamma_k (\underline \alpha, \underline \theta) V^T_k(\underline \alpha, 
\underline \theta) \ , 
\end{align}
where $\Gamma_k (\underline \alpha, \underline \theta)={\rm 
diag}[\gamma_{k,1}(\underline \alpha, \underline \theta), \dots, 
\gamma_{k,M}(\underline \alpha, \underline \theta)]$ is the diagonal matrix 
whose entries $\gamma_{k,j}(\underline \alpha, \underline \theta)$, $j=1, \dots, 
M$, are the singular values of $A_k(\underline \alpha, \underline \theta)$, and 
$U_k (\underline \alpha, \underline \theta)$ and $V_k (\underline \alpha, 
\underline \theta)$ are rotations in $\R^M$. Note that \eqref{rottwo} implies 
$\gamma_{k,j}(\underline \alpha, \underline \theta) \in [0,1]$ for $j=1, \dots, 
M$. We shall use the abbreviation
\begin{equation*}
 h_0(U_k(\underline \alpha, \underline 
\theta)\mathbf v)=h_{0, U_k(\underline \alpha, \underline 
\theta)}(\mathbf v) \ .
\end{equation*}
These considerations can be summarized by the 
representation formula presented in the following theorem.

\begin{theorem}[Representation formula] \label{thm: representation}
The function $h_k$ can be written as
\begin{multline} \label{eq: representation}
h_k(\mathbf v)  = \sum_{\alpha_1, \dots, \alpha_k} \lambda_{\alpha_1} \cdots 
\lambda_{\alpha_k} \int_{[-\pi,\pi]^k} \rho(\theta_1)\,  \mathrm{d}\theta_1 
\cdots \rho(\theta_k) \, \mathrm{d} \theta_k \times \\  \times \int_{\R^M}  
h_{0, U_k(\underline \alpha, \underline \theta)} \left(\Gamma_k(\underline 
\alpha, \underline \theta) V_k^T(\underline \alpha, \underline \theta) \mathbf v
+ \left(I_M - \Gamma^2_k(\underline \alpha, \underline \theta) \right)^{1/2} \mathbf w\right) e^{-\pi |\mathbf w|^2}\,  \mathrm{d} \mathbf w \ ,
\end{multline} 
where $\smash{h_{0,U_k(\underline \alpha, \underline \theta)}}$, 
$\Gamma_k(\underline \alpha, \underline \theta)$ and $V_k$ are as defined above. 
\end{theorem}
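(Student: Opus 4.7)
The plan is to apply the lemma just proved, together with the singular value decomposition \eqref{eq:svdA} of $A_k$ and the orthogonal invariance of the standard Gaussian, to the pre-representation expression
\begin{equation*}
h_k(\mathbf v) = \sum_{\alpha_1,\dots,\alpha_k} \lambda_{\alpha_1}\cdots\lambda_{\alpha_k} \int_{[-\pi,\pi]^k} \rho(\theta_1)\,\mathrm{d}\theta_1 \cdots \rho(\theta_k)\,\mathrm{d}\theta_k \int_{\R^N} h_0(A_k\mathbf v + B_k\mathbf w)\, e^{-\pi|\mathbf w|^2}\,\mathrm{d}\mathbf w
\end{equation*}
that was derived immediately prior to the theorem. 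Throughout, the sum and the $\underline\theta$-integrals are inert, so the whole argument reduces to rewriting the innermost $\mathbf w$-integral for each fixed $(\underline\alpha,\underline\theta)$.

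First I would invoke relation \eqref{rottwo}, $A_kA_k^T + B_kB_k^T = I_M$, which is exactly the hypothesis of the preceding lemma. Applying that lemma with $A=A_k(\underline\alpha,\underline\theta)$ and $B=B_k(\underline\alpha,\underline\theta)$ converts the $\R^N$-integral into an $\R^M$-integral:
\begin{equation*}
\int_{\R^N} h_0(A_k\mathbf v + B_k\mathbf w)\, e^{-\pi|\mathbf w|^2}\,\mathrm{d}\mathbf w = \int_{\R^M} h_0\bigl(A_k\mathbf v + (I_M - A_kA_k^T)^{1/2}\mathbf u\bigr)\, e^{-\pi|\mathbf u|^2}\,\mathrm{d}\mathbf u.
\end{equation*}
Next I would use the SVD $A_k = U_k \Gamma_k V_k^T$ from \eqref{eq:svdA} to compute $I_M - A_kA_k^T = U_k(I_M - \Gamma_k^2)U_k^T$, hence $(I_M - A_kA_k^T)^{1/2} = U_k(I_M - \Gamma_k^2)^{1/2}U_k^T$, which is well defined because all singular values $\gamma_{k,j}$ lie in $[0,1]$. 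Inserting this, the argument of $h_0$ becomes $U_k\bigl[\Gamma_k V_k^T\mathbf v + (I_M - \Gamma_k^2)^{1/2} U_k^T \mathbf u\bigr]$.

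Finally, the orthogonal change of variables $\mathbf u \mapsto U_k \mathbf w$ leaves both $e^{-\pi|\cdot|^2}$ and the Lebesgue measure invariant and eliminates the inner $U_k^T$; using the abbreviation $h_0(U_k\,\cdot) = h_{0,U_k(\underline\alpha,\underline\theta)}(\cdot)$, this yields exactly the representation \eqref{eq: representation}. The only subtlety in the argument lies in carefully tracking which orthogonal factors act on the external variable $\mathbf v$ and which can be absorbed by Gaussian invariance on the integration variable; everything else is bookkeeping. No part of this poses a genuine obstacle, so the theorem should follow in a few lines.
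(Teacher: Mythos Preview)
Your proposal is correct and follows essentially the same approach as the paper: apply the preceding lemma (using \eqref{rottwo}) to reduce the $\R^N$-integral to an $\R^M$-integral, then insert the SVD of $A_k$ and absorb one orthogonal factor into the Gaussian integration variable. The paper in fact presents the theorem as a summary of exactly these considerations rather than giving a separate proof, so your write-up is a faithful expansion of the intended argument.
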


\section{\bf The hypercontractive estimate } \label{sec: hypercontr}

Starting from \eqref{eq: representation} and using convexity of the entropy 
and Jensen's inequality together with
\begin{equation*}
 \sum_{\alpha_1, \dots, \alpha_k} \lambda_{\alpha_1} \cdots 
\lambda_{\alpha_k} 
\int_{[-\pi,\pi]^k} \rho(\theta_1) \, \mathrm{d}\theta_1 \cdots 
\rho(\theta_k)\,  \mathrm{d} \theta_k =1  \ ,
\end{equation*}
we get
\begin{align*}
\mathcal S(h_k) \le \sum_{\alpha_1, \dots, \alpha_k} \lambda_{\alpha_1} \cdots 
\lambda_{\alpha_k} 
\int_{[-\pi,\pi]^k} \rho(\theta_1) \, \mathrm{d}\theta_1 \cdots 
\rho(\theta_k)\,  \mathrm{d} \theta_k \, \mathcal S(g_k(\cdot,\underline \alpha, 
\underline \theta)),
\end{align*}
where we set
\begin{align}\label{eq:gk}
g_k(\mathbf v,\underline \alpha, 
\underline \theta) = \int_{\R^M}  h_{0, U_k(\underline \alpha, \underline 
\theta)} 
\left(\gamma_k(\underline \alpha, \underline \theta)  \mathbf v
+ \left(I_M - \gamma^2_k(\underline \alpha, \underline \theta) \right)^{1/2} 
\mathbf w\right) e^{-\pi |\mathbf w|^2} \,  \mathrm{d} \mathbf w \ ,
\end{align}
and we removed the rotation $V_k^T(\underline \alpha, \underline \theta)$ by 
a change of variables.

To explain the main observation in this section we look at \eqref{eq:gk} when 
$M=1$. Since $0\leq\gamma_k(\underline \alpha, \underline \theta)\leq 1$, we 
can write $\gamma_k(\underline \alpha, \underline \theta)=e^{-t}$ and we get 
$g_k(v,\underline \alpha, 
\underline \theta)= N_t(h_{0, U_k(\underline \alpha, 
\underline \theta)})$ where $N_t$ is the Ornstein-Uhlenbeck semigroup, that is
\begin{align*}
N_th(x) = \int_\R h\left(e^{-t} x + \sqrt{1-e^{-2t}}y\right) e^{-\pi y^2} \, 
\mathrm{d}y \ .
\end{align*}
Thus Theorem \ref{thm: representation} renders the function $h_k$ as a convex 
combination of terms reminiscent of the Ornstein-Uhlenbeck process, 
albeit in matrix form. We make use of this observation to find a bound for 
$\mathcal S(g_k(\cdot,\underline \alpha, 
\underline \theta))$. This bound together with a suitable correlation 
inequality proved in the next section will lead to a bound for $ \mathcal 
S(h_k)$.

In addition to the notation developed in the 
previous section, we need various marginals of the function 
$\smash{h_{0,U_k(\underline{\alpha}, \underline{\theta})}}$. Quite generally,  
if $h$ is a function of $M$ variables and $\sigma \subset \{1, \dots, M\}$, we 
shall denote by $h^\sigma$ the marginals of $h$ with respect to the variables 
$v_j, j \in \sigma$, for instance,
\begin{align*}
h^{\{1,2\}}(v_3, \dots, v_M) = \int_{\R^2} h(v_1,v_2, v_3, \dots, v_M) 
e^{-\pi\left(v_1^2+v_2^2\right)} \, \mathrm{d}v_1 \mathrm{d}v_2 \ . 
\end{align*}
It will be convenient to use the matrix $P_\sigma: \R^M \rightarrow 
\R^{\vert\sigma\vert}$ that projects $\R^M$ orthogonally onto 
$\R^{\vert\sigma\vert}$ which we will identify with subspace of $\R^M$. To give 
an example, let $\mathbf{v} = (v_1, ..., v_M)$. Then 
$P_{\{1,2\}}\mathbf{v}~=~(v_1, v_2)$.
The following theorem is the main result of this section. 

\begin{theorem}[Partial entropy bound]\label{thm: entropypartial}
Let $h_0 \in L^1(\R^M, e^{-\pi |\mathbf v|^2} d \mathbf v)$ be nonnegative and 
assume that $\mathcal S(h_0) < \infty$. Then 
\begin{multline} \label{est:entropypartial}
\mathcal S(g_k(\cdot,\underline \alpha, 
\underline \theta)) \\ \le 
% \sum_{\alpha_1, \dots, \alpha_k} \lambda_{\alpha_1} \cdots 
% \lambda_{\alpha_k} \int_{[-\pi,\pi]^k} \rho(\theta_1) \, \mathrm{d}\theta_1 
% \cdots \rho(\theta_k) \, \mathrm{d} \theta_k \times \\ \times 
\sum_{\sigma 
\subset \{1, \dots, M\}} \prod_{i \in \sigma^{\rm c}} \gamma_{k,i}^2 \prod_{j 
\in \sigma} \left(1 - \gamma_{k,j}^2\right) \int_{\R^M} h_0( \mathbf v) \log 
h_{0,U_k(\underline \alpha, \underline \theta)}^\sigma \left(P_{\sigma^{\rm 
c}}U_k(\underline \alpha, \underline \theta)^T\mathbf v\right) e^{-\pi |\mathbf 
v|^2} \, \mathrm{d} \mathbf v \ ,
\end{multline} 
where $\sigma^{\rm c}$ is the complement of the set $\sigma$ in $\{1, ..., M\}$.
\end{theorem}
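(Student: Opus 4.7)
The plan is to recognize the integral defining $g_k$ in \eqref{eq:gk} as an iterated Ornstein--Uhlenbeck semigroup acting coordinatewise on the rotated initial datum $h_{0,U_k(\underline\alpha,\underline\theta)}$, and then to iterate a single--variable hypercontractive entropy bound over all $M$ coordinates. Concretely, since $\Gamma_k(\underline\alpha,\underline\theta)$ is diagonal with entries $\gamma_{k,1},\ldots,\gamma_{k,M}\in[0,1]$, the Gaussian integral over $\mathbf w\in\R^M$ factorizes and
\[
 g_k(\mathbf v,\underline\alpha,\underline\theta) \;=\; \bigl(N_{\gamma_{k,1}}^{(1)}\cdots N_{\gamma_{k,M}}^{(M)} h_{0,U_k(\underline\alpha,\underline\theta)}\bigr)(\mathbf v)\,,
\]
where $N_\gamma^{(i)}$ is the one--dimensional Ornstein--Uhlenbeck operator $N_\gamma^{(i)} h(\mathbf v) = \int_\R h(\ldots,\gamma v_i+\sqrt{1-\gamma^2}\,y,\ldots)\,e^{-\pi y^2}\,\mathrm dy$ applied to the $i$-th argument. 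These operators mutually commute and each one preserves the marginals in the other coordinates.

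The key single--variable inequality I would establish is: for any nonneg\-ative $h$ on $\R^M$ with unit integral against $e^{-\pi|\mathbf v|^2}\,\mathrm d\mathbf v$,
\[
 \mathcal S\bigl(N_\gamma^{(i)} h\bigr) \;\le\; \gamma^2\, \mathcal S(h) + (1-\gamma^2)\, \mathcal S\bigl(h^{\{i\}}\bigr)\,.
\]
To prove this, decompose the entropy by conditioning on the complementary variables: $\mathcal S(h)-\mathcal S(h^{\{i\}}) = \int h^{\{i\}}(\mathbf v') \,\mathrm{Ent}_{v_i}\!\bigl(h(\cdot,\mathbf v')/h^{\{i\}}(\mathbf v')\bigr)\,\mathrm d\mu$. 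Because $N_\gamma^{(i)}$ acts trivially in $\mathbf v'$, it preserves the marginal $h^{\{i\}}$, and on the conditional density $h(\cdot,\mathbf v')/h^{\{i\}}(\mathbf v')$ (which is a probability density on $\R$) the classical one--dimensional Ornstein--Uhlenbeck entropy contraction (equivalent to Nelson's hypercontractive estimate, or to Gross's logarithmic Sobolev inequality) yields $\mathrm{Ent}(N_\gamma (\,\cdot\,))\le \gamma^2\,\mathrm{Ent}(\,\cdot\,)$. Integrating in $\mathbf v'$ proves the claim.

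Next I would iterate this inequality over $i=1,\ldots,M$. Using that $(N_\gamma^{(i)} h)^{\{j\}} = N_\gamma^{(i)}(h^{\{j\}})$ for $j\neq i$ by Fubini, an induction on $M$ (or a direct recursive expansion) produces
\[
 \mathcal S\!\left(N_{\gamma_{k,1}}^{(1)}\!\cdots N_{\gamma_{k,M}}^{(M)} h_{0,U_k}\right) \;\le\; \sum_{\sigma\subset\{1,\ldots,M\}} \prod_{i\in\sigma^{\mathrm c}}\gamma_{k,i}^{2}\prod_{j\in\sigma}(1-\gamma_{k,j}^{2})\,\mathcal S\bigl(h_{0,U_k}^{\sigma}\bigr)\,.
\]
The binomial--type coefficients arise because at each step one branch carries the factor $\gamma_{k,i}^{2}$ (keep the entropy) and the other carries $1-\gamma_{k,i}^{2}$ (replace the function by its marginal in the $i$-th variable).

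To match the statement exactly, I would rewrite $\mathcal S(h_{0,U_k}^\sigma)=\int h_{0,U_k}(\mathbf v)\,\log h_{0,U_k}^\sigma(P_{\sigma^{\mathrm c}}\mathbf v)\,e^{-\pi|\mathbf v|^2}\,\mathrm d\mathbf v$ (this is the standard identity expressing the entropy of a marginal as an integral against the full density), and then change variables $\mathbf v\mapsto U_k(\underline\alpha,\underline\theta)^T\mathbf v$, using rotation invariance of the Gaussian and $h_{0,U_k}(U_k^T\mathbf v)=h_0(\mathbf v)$, which produces the right--hand side of \eqref{est:entropypartial}. The main technical obstacle, which I would address first, is justifying the single--variable inequality together with the preservation of the marginal under $N_\gamma^{(i)}$; once that is in place, the iteration and the change of variables are book-keeping.
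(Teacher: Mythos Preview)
Your proposal is correct and follows essentially the same route as the paper: both recognize $g_k$ as a product of one--dimensional Ornstein--Uhlenbeck operators applied coordinatewise to $h_{0,U_k}$, iterate the one--variable entropic hypercontractive bound (Corollary~\ref{entropicnelson}) over the $M$ coordinates, and finish by rewriting $\mathcal S(h_{0,U_k}^\sigma)$ as an integral against the full density followed by the rotation $\mathbf v\mapsto U_k^T\mathbf v$. Your phrasing via conditional entropies is just an unpacking of the paper's application of Corollary~\ref{entropicnelson} in one variable with the others frozen (the term $\|h\|_1\log\|h\|_1$ there, after integration over the remaining variables, is exactly your $\mathcal S(h^{\{i\}})$).
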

A key role in the proof of Theorem \ref{thm: entropypartial} is played by 
Nelson's hypercontractive estimate. 
\begin{theorem}[Nelson's hypercontractive estimate] \label{thm:Nelson}
The Ornstein-Uhlenbeck semigroup, 
\begin{align*}
N_th(x) = \int_\R h\left(e^{-t} x + \sqrt{1-e^{-2t}}y\right) e^{-\pi y^2} \, 
\mathrm{d}y \ , 
\end{align*}
for $t\geq0$, is  bounded  from $L^p(\R, e^{-\pi x^2} \, \mathrm{d}x)$ to 
$L^q(\R, e^{-\pi x^2} \, \mathrm{d}x)$ if and only if 
\begin{align*}
(p-1) \ge e^{-2t}(q-1) \ .
\end{align*}
For such values of $p$ and $q$, 
\begin{align*}
\Vert N_th \Vert_q \le \Vert h \Vert_{p}
\end{align*}
with equality if and only if $h$ is constant.
\end{theorem}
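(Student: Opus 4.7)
My plan is to use Gross's differential method: sufficiency will follow from Gross's logarithmic Sobolev inequality for the Gaussian measure $d\gamma(x) := e^{-\pi x^2}\,dx$, applied along a carefully chosen time-dependent exponent $q(t)$, while necessity will come from a direct test against an exponential function. As preparation, I would record that after the rescaling $\tilde x = \sqrt{2\pi}\,x$, the measure $\gamma$ becomes the standard Gaussian and $N_t$ becomes the standard Ornstein--Uhlenbeck semigroup, so in the original variable $N_t$ has generator $L = \frac{1}{2\pi}\partial_x^2 - x\partial_x$, satisfies $Lx = -x$ (in particular $N_t x = e^{-t}x$), and its Dirichlet form is $\frac{1}{2\pi}\int (f')^2\,d\gamma$. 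The log-Sobolev inequality in this normalization,
$$\int_\R f^2 \log\frac{f^2}{\|f\|_{L^2(\gamma)}^2}\,d\gamma \leq \frac{1}{\pi}\int_\R (f')^2\,d\gamma,$$
can be established by tensorizing the two-point Bernoulli log-Sobolev inequality and then invoking the central limit theorem.

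For sufficiency, fix nonnegative $h\in L^p(\gamma)$ and set $u(t,\cdot) := N_t h$, $F(t) := \|u(t,\cdot)\|_{L^{q(t)}(\gamma)}$ for a smooth increasing $q$ with $q(0)=p$. Differentiating $F^{q} = \int u^{q}\,d\gamma$ in $t$, using $\partial_t u = Lu$ together with integration by parts against $\gamma$, and writing $f := u^{q/2}$, yields
\begin{align*}
\bigl(F^{q}\bigr)'(t) = \frac{2q'}{q}\int f^2 \log f\,d\gamma - \frac{2(q-1)}{\pi q}\int (f')^2\,d\gamma.
\end{align*}
Inserting the log-Sobolev inequality applied to $f$ (whose $L^2(\gamma)$-norm equals $F^{q/2}$) and using $(F^q)' = F^q(q'\log F + qF'/F)$ simplifies this to
$$\frac{F'(t)}{F(t)} \leq \frac{q'(t) - 2(q(t)-1)}{\pi q(t)^2 F(t)^{q(t)}}\int (f')^2\,d\gamma.$$
Choosing $q$ to solve $q' = 2(q-1)$ with $q(0) = p$, i.e.\ $q(t) = 1 + (p-1)e^{2t}$, makes $F' \leq 0$, hence $\|N_t h\|_{q(t)} \leq \|h\|_p$. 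Monotonicity of $L^q$-norms in $q$ extends the bound to every $q$ with $q - 1 \leq (p-1)e^{2t}$, and a standard approximation argument handles general $h \in L^p(\gamma)$.

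For necessity I would test on the exponential $h(x) := e^{\epsilon x}$ with $\epsilon\in\R$. The Gaussian Laplace transform $\int e^{ax}\,d\gamma = e^{a^2/(4\pi)}$ gives $N_t h(x) = \exp\bigl(\epsilon e^{-t}x + \epsilon^2(1-e^{-2t})/(4\pi)\bigr)$, whence
$$\|h\|_p = e^{p\epsilon^2/(4\pi)},\qquad \|N_t h\|_q = e^{[1+(q-1)e^{-2t}]\epsilon^2/(4\pi)}.$$
If $N_t:L^p(\gamma)\to L^q(\gamma)$ were bounded with any finite constant $C$, the inequality $\|N_t h\|_q \leq C\|h\|_p$ valid for all $\epsilon\in\R$ would force the quadratic coefficient on the left to be no larger than that on the right, i.e.\ $1 + (q-1)e^{-2t} \leq p$, equivalently $(p-1)\geq e^{-2t}(q-1)$. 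The equality case reads off the sufficiency argument: $F'(t) = 0$ requires the log-Sobolev defect to vanish, forcing $f$ and hence $N_t h$ to be constant; by injectivity of $N_t$ this happens iff $h$ itself is constant. The main technical obstacle is to justify the formal manipulations in the differential argument for general $h\in L^p(\gamma)$, which is routine and handled by approximating $h$ by $h+\delta$, carrying out the estimates there, and letting $\delta\to 0$.
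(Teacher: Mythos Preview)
The paper does not actually prove this theorem; its proof section reads ``For a proof we refer the reader to \cite{Nelson}. For other proofs see \cite{Gross1, Gross2, Federbush, CarlenLoss}.'' Your proposal supplies what the paper omits, following precisely Gross's differential approach (one of the cited references): deriving the hypercontractive bound from the Gaussian logarithmic Sobolev inequality via the monotonicity of $t\mapsto\|N_t h\|_{q(t)}$ along the critical curve $q(t)=1+(p-1)e^{2t}$. Your computations of the generator $L=\frac{1}{2\pi}\partial_x^2-x\partial_x$, the Dirichlet form, the LSI constant $1/\pi$ in this normalization, the derivative identity for $(F^q)'$, and the exponential test for necessity are all correct.

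There is one genuine slip in the equality discussion. You assert that vanishing of the log-Sobolev defect forces $f=(N_th)^{q/2}$ to be constant. In fact, equality in the Gaussian LSI $\mathrm{Ent}(f^2)\le\frac{1}{\pi}\int(f')^2\,d\gamma$ holds precisely for exponentials $f(x)=ce^{ax}$, not only for constants; your own test functions $h=e^{\epsilon x}$ confirm this, since they satisfy $\|N_th\|_q=\|h\|_p$ exactly at the critical endpoint $p-1=e^{-2t}(q-1)$. Thus the statement ``equality iff $h$ is constant'' is literally correct only in the strict regime $p-1>e^{-2t}(q-1)$, where one additionally invokes strictness of Jensen's inequality in the step $\|\cdot\|_q\le\|\cdot\|_{q(t)}$. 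At the critical endpoint the extremizers are all exponentials. You should amend the equality argument accordingly (and, if you wish, note that the theorem as quoted is slightly imprecise on this point).
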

\begin{proof} For a proof we refer the reader to \cite{Nelson}. For other proofs 
see \cite{Gross1, Gross2,Federbush,CarlenLoss}. 
\end{proof}

Nelson's hypercontractive estimate, that is Theorem~\ref{thm:Nelson}, implies 
the following Corollary, which will be useful in the proof of Theorem~\ref{thm: 
entropypartial}.

\begin{corollary}[Entropic version of Nelson's hypercontractive estimate] 
\label{entropicnelson} Let $h:\R \rightarrow \R_+$ be a function in $L^1(\R, 
e^{-\pi x^2}\, \mathrm{d}x)$ with finite entropy, i.e.,
\begin{align*}
\mathcal S(h) = \int_\R h(x) \log h(x) \, e^{-\pi x^2} dx < \infty \ .
\end{align*}
Then
\begin{align*}
\mathcal S(N_th) \le e^{-2t} \mathcal S(h) + (1-e^{-2t}) \Vert h \Vert_1 \log 
\Vert h \Vert_1 \  
\end{align*}
for all $t \geq 0$.
\end{corollary}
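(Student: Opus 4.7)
The plan is to derive the entropic bound as the infinitesimal form of Nelson's inequality, specialized to exponents close to $1$, where hypercontractivity degenerates into a statement about $\log\|\cdot\|_q$ and hence, after differentiation, about entropy.

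First I would set, for $s>0$ small,
\[
q = 1+s,\qquad p-1 = e^{-2t}(q-1), \text{ i.e., } p=1+e^{-2t}s,
\]
so that Theorem~\ref{thm:Nelson} applies and yields $\|N_t h\|_q \le \|h\|_p$. Taking logarithms, this is
\[
\Phi_1(s) := \log \|N_t h\|_{1+s} \le \log \|h\|_{1+e^{-2t}s} =: \Phi_2(s), \qquad s\ge 0.
\]
At $s=0$ both sides equal $\log\|h\|_1$: the right-hand side obviously, and the left-hand side because $N_t$ is a Markov operator with the Gaussian $e^{-\pi x^2}\mathrm{d}x$ as invariant measure, so by Fubini $\|N_t h\|_1 = \|h\|_1$. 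Since $\Phi_1(0)=\Phi_2(0)$ and $\Phi_1\le\Phi_2$ for $s\ge 0$, the one-sided derivatives must satisfy $\Phi_1'(0^+)\le \Phi_2'(0^+)$.

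Next I would compute these derivatives. For a nonnegative $h$ with $\mathcal S(h)<\infty$, a standard computation (differentiating $\frac{1}{q}\log\int h^{q} e^{-\pi x^2}\mathrm{d}x$ at $q=1$, with smoothness justified by cutting $h$ off and passing to the limit using monotone/dominated convergence) gives
\[
\frac{d}{dq}\log\|h\|_q\bigg|_{q=1} = \frac{\mathcal S(h) - \|h\|_1\log\|h\|_1}{\|h\|_1}.
\]
Applying this identity to $N_t h$ for $\Phi_1$, and to $h$ together with the chain-rule factor $e^{-2t}$ for $\Phi_2$, and using $\|N_t h\|_1 = \|h\|_1$, the inequality $\Phi_1'(0^+)\le \Phi_2'(0^+)$ reads
\[
\frac{\mathcal S(N_th) - \|h\|_1 \log\|h\|_1}{\|h\|_1}
\;\le\; e^{-2t}\,\frac{\mathcal S(h) - \|h\|_1 \log\|h\|_1}{\|h\|_1}.
\]
Multiplying through by $\|h\|_1$ and rearranging produces the claimed inequality
\[
\mathcal S(N_th)\le e^{-2t}\mathcal S(h) + (1-e^{-2t})\,\|h\|_1 \log \|h\|_1.
\]

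The only technical obstacle is the justification of differentiability of $q\mapsto \log\|h\|_q$ at $q=1$ and of interchanging derivative and integral. This is standard but requires some care because $h\log h$ need not be integrable without an assumption beyond $h\in L^1$; however, the hypothesis $\mathcal S(h)<\infty$ supplies exactly the integrability of $h\log_+ h$ needed. For $h\log_- h$ one uses $h\log_- h \le h/e$, which is integrable since $h\in L^1(e^{-\pi x^2}\mathrm{d}x)$. A routine truncation argument (replacing $h$ by $\min(h,n)+\frac{1}{n}$, applying the smooth case, then letting $n\to\infty$ by monotone convergence on the entropy) handles any remaining regularity issue and yields the corollary in full generality.
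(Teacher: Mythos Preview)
Your proposal is correct and follows essentially the same route as the paper: both differentiate Nelson's inequality $\|N_t h\|_q\le\|h\|_p$ at the endpoint $p=q=1$ along the curve $p-1=e^{-2t}(q-1)$, and both handle the regularity issue by truncation. The only cosmetic difference is that the paper forms the difference quotient $\bigl(\|\cdot\|_q-\|\cdot\|_1\bigr)/(q-1)$ directly, whereas you first take logarithms and differentiate $\log\|\cdot\|_q$; after the identity $\tfrac{d}{dq}\log\|h\|_q\big|_{q=1}=\bigl(\mathcal S(h)-\|h\|_1\log\|h\|_1\bigr)/\|h\|_1$ the two computations coincide.
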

\begin{proof}
Let $h \in L^p(\R, e^{-\pi x^2} \, \mathrm{d}x)$, for $p \ge 1$ small, be a 
nonnegative function. As $\Vert N_th\Vert_1~=~\Vert h \Vert_1$, we can apply 
Nelson's hypercontractive estimate, which implies that for $p, q$ that satisfy 
$(p-1)~=~e^{-2t}(q-1)$,
\begin{align*}
\frac{\Vert N_th \Vert_q - \Vert N_t h \Vert_1}{q-1} \le \frac{\Vert h \Vert_p 
- \Vert h \Vert_1}{q-1} = e^{-2t} \frac{\Vert h \Vert_p - \Vert h \Vert_1}{p-1} 
\ .
\end{align*}
Sending $p \to 1$ and hence $q\to1$, we get the claimed estimate for such 
functions $h$. If $h$ just has finite entropy one cuts off $h$ at large values, 
uses the above estimate and removes the cutoff using the monotone convergence 
theorem.
\end{proof}
We are now ready to prove Theorem~\ref{thm: entropypartial}.
\begin{proof}[Proof of Theorem \ref{thm: entropypartial}]
% Let us abbreviate
% \begin{align*}
% \int_{\R^M}  h_{0, U_k(\underline \alpha, \underline \theta)} 
% \left(\Gamma_k(\underline \alpha, \underline \theta)  \mathbf v + \left(I_M - 
% \Gamma^2_k(\underline \alpha, \underline \theta) \right)^{1/2} \mathbf w\right) 
% e^{-\pi |\mathbf w|^2} \,  \mathrm{d} \mathbf w = g_k(\mathbf v) \ ,
% \end{align*}
% where we removed the rotation $V_k^T(\underline \alpha, \underline \theta)$ by a change of variables.
% We start from \eqref{eq: representation}. Convexity of the entropy and an application of Jensen's inequality yield
% \begin{align*}
% \mathcal S(h_k) \le \sum_{\alpha_1, \dots, \alpha_k} \lambda_{\alpha_1} \cdots \lambda_{\alpha_k} 
% \int_{[-\pi,\pi]^k} \rho(\theta_1) \, \mathrm{d}\theta_1 \cdots \rho(\theta_k)\,  \mathrm{d} \theta_k \, \mathcal S(g_k).
% \end{align*}
% Now, r
Remember that $0 \le \gamma_{k,j}(\underline{\alpha},\underline{\theta}) \le 1$ 
for $j = 1,...,M$. Thus, by inductively applying Corollary \ref{entropicnelson} 
to
\begin{align*}
 \int_{\R^M}h_{0, U_k(\underline \alpha, \underline \theta)} \left( 
\gamma_{k,1} v_1 + \sqrt{1 - \gamma_{k,1}^2} \, u_1, \dots, \gamma_{k,M} v_M + 
\sqrt{1 -\gamma_{k,M}^2} \, u_M\right) e^{-\pi \sum_{j=1}^M u_j^2} \, 
\mathrm{d}u_1 \cdots \mathrm{d} u_M \ ,
\end{align*}
we obtain
\begin{multline*}
\mathcal S(g_k(\cdot,\underline \alpha, 
\underline \theta)) \le 
% \sum_{\alpha_1, \dots, \alpha_k} \lambda_{\alpha_1} 
% \cdots \lambda_{\alpha_k} \int_{[-\pi,\pi]^k} \rho(\theta_1) 
% \,\mathrm{d}\theta_1 \cdots \rho(\theta_k) \,\mathrm{d} \theta_k \times \\ 
% \times 
\sum_{\sigma \subset \{1, \dots, M\}} \prod_{i \in \sigma^{\rm c}} 
\gamma_{k,i}^2 \prod_{j \in \sigma} (1 - \gamma_{k,j}^2) \int_{\R^{|\sigma^{\rm 
c}|}} h_{0,U_k(\underline \alpha, \underline \theta)}^\sigma(\mathbf u) \log 
h_{0,U_k(\underline \alpha, \underline \theta)}^\sigma(\mathbf u) \, e^{-\pi 
|\mathbf u|^2} \, \mathrm{d} \mathbf u  \ .
\end{multline*}
Inserting the definition of the marginal $\smash{h_{0,U_k(\underline \alpha, 
\underline \theta)}^\sigma}$, we see that  
\begin{align*}
\int_{\R^{|\sigma^{\rm c}|}} h_{0,U_k(\underline \alpha, \underline 
\theta)}^\sigma(\mathbf u) \log h_{0,U_k(\underline \alpha, \underline 
\theta)}^\sigma(\mathbf u) \, e^{-\pi |\mathbf u|^2} \, \mathrm{d} \mathbf u
& =\int_{\R^M} h_{0,U_k(\underline \alpha, \underline 
\theta)}^\sigma(P_{\sigma^{\rm c}} \mathbf v) \log h_{0,U_k(\underline \alpha, 
\underline \theta)}^\sigma(P_{\sigma^{\rm c}}\mathbf v) \, e^{-\pi |\mathbf 
v|^2} \, \mathrm{d} \mathbf v \\
& = \int_{\R^M} h_{0,U_k(\underline \alpha, \underline \theta)}( \mathbf v) 
\log h_{0,U_k(\underline \alpha, \underline \theta)}^\sigma(P_{\sigma^{\rm 
c}}\mathbf v) \, e^{-\pi |\mathbf v|^2} \, \mathrm{d} \mathbf v \\
& = \int_{\R^M} h_0( \mathbf v) \log h_{0,U_k(\underline \alpha, \underline  
\theta)}^\sigma(P_{\sigma^{\rm c}}U_k(\underline \alpha, \underline 
\theta)^T\mathbf v) \, e^{-\pi |\mathbf v|^2} \, \mathrm{d} \mathbf v  ,
\end{align*}
which finishes the proof of Theorem \ref{thm: entropypartial}.
\end{proof}

\section{\bf The key entropy bound}\label{sec: correlation}

Collecting the results of the previous sections we get the following bound
\begin{align}\label{est:partial}
\mathcal S(h_k) \leq & \sum_{\alpha_1, \dots, \alpha_k} \lambda_{\alpha_1} 
\cdots \lambda_{\alpha_k} 
\int_{[-\pi,\pi]^k} \rho(\theta_1) \, \mathrm{d}\theta_1 \cdots \rho(\theta_k) 
\, \mathrm{d} \theta_k \, \times \notag \\
&\times \sum_{\sigma \subset \{1, \dots, M\}} \prod_{i \in 
\sigma^{\rm c}} \gamma_{k,i}^2 \prod_{j \in \sigma} \left(1 - 
\gamma_{k,j}^2\right) \int_{\R^M} h_0( \mathbf v) \log h_{0,U_k(\underline 
\alpha, \underline \theta)}^\sigma \left(P_{\sigma^{\rm c}}U_k(\underline 
\alpha, \underline \theta)^T\mathbf v\right) \, e^{-\pi |\mathbf v|^2} \, 
\mathrm{d} \mathbf v .
\end{align}
The right-hand side of  \eqref{est:partial} contains a large sum over the entropy of 
marginals of $h_0$. In order to bound such a sum in terms of the entropy of 
$h_0$ one may try to apply some version of the Loomis-Whitney inequality 
\cite{LW} or, more precisely, of an inequality by Han \cite{Han}. This is 
essentially correct, but will require a substantial generalization of this inequality.
 Let us first formulate the main theorem of this section.

\begin{theorem}[Entropy bound] \label{thm: entropyestimate}
The estimate
\begin{align} \label{eq:estimatemarginals}
{\mathcal S}(h_k)\le \left[\frac{M}{N+M} 
+\frac{N}{N+M}\left(1-\mu_\rho\frac{N+M}{N\Lambda}\right)^k\right] 
\mathcal S(h_0) 
\end{align}
holds. %whenever $\mathcal S(h_0) < \infty$.
\end{theorem}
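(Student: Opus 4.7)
The strategy is to combine the partial-entropy bound \eqref{est:partial} with a sharp correlation inequality coming from Barthe's geometric Brascamp--Lieb inequality, whose datum is rendered geometric by an averaging identity (``sum rule'') for the random matrices $A_k$.

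First I would recast the right-hand side of \eqref{est:partial} as a sum of marginal relative entropies. The change of variables $\mathbf v \mapsto U_k \mathbf v$ together with the rotation-invariance of the Gaussian $e^{-\pi|\mathbf v|^2}$ gives
\[
\int_{\mathbb R^M} h_0(\mathbf v) \log h_{0,U_k}^\sigma\!\left(P_{\sigma^c} U_k^T \mathbf v\right) e^{-\pi|\mathbf v|^2}\, d\mathbf v = \mathcal S(\tilde h_0^\sigma), \qquad \tilde h_0 := h_0 \circ U_k,
\]
so that \eqref{est:partial} becomes $\mathcal S(h_k) \le \sum_{\underline\alpha, \underline\theta, \sigma} \lambda_{\underline\alpha}\rho(\underline\theta)\, w_\sigma\, \mathcal S(\tilde h_0^\sigma)$, with $w_\sigma = \prod_{i\in\sigma^c}\gamma_{k,i}^2 \prod_{j\in\sigma}(1-\gamma_{k,j}^2)$.

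The key step is to view the surjective maps $L_\sigma := P_{\sigma^c} U_k^T \colon \mathbb R^M \to \mathbb R^{|\sigma^c|}$, together with the weights $\lambda_{\underline\alpha}\rho(\underline\theta)\, w_\sigma(\underline\alpha,\underline\theta)$, as a continuous Brascamp--Lieb datum on $\mathbb R^M$. A direct computation using the SVD \eqref{eq:svdA} gives $\sum_\sigma w_\sigma L_\sigma^T L_\sigma = U_k \Gamma_k^2 U_k^T = A_k A_k^T$, so that after averaging the quadratic sum equals $\mathbb E[A_k A_k^T]$ with respect to the probability measure $\lambda_{\underline\alpha}\rho(\underline\theta)\, d\underline\theta$. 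The sum rule, proved in Section~\ref{sec: proofofsumrule}, asserts that this matrix is a scalar multiple of the identity,
\[
\mathbb E[A_k A_k^T] = \tau_k\, I_M, \qquad \tau_k := \frac{1}{M} \mathbb E\bigl[\operatorname{tr}(A_k A_k^T)\bigr].
\]
Rescaling each weight by $1/\tau_k$ turns the datum into a geometric Brascamp--Lieb datum, and the entropy form of Barthe's inequality---namely $\sum_i c_i \mathcal S(\nu^{B_i}) \le \mathcal S(\nu)$ whenever $\sum_i c_i B_i^T B_i = I$---together with the invariance $\mathcal S(\tilde h_0) = \mathcal S(h_0)$ then yields
\[
\sum_{\underline\alpha, \underline\theta, \sigma} \lambda_{\underline\alpha}\rho(\underline\theta)\, w_\sigma\, \mathcal S(\tilde h_0^\sigma) \le \tau_k\, \mathcal S(h_0).
\]

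To close the argument I would identify $\tau_k$ by a short recursion. Only system--reservoir collisions modify $A_k A_k^T$; conditioning on such a collision between a system particle $i$ and a reservoir particle $j$ while using $\int \rho(\theta) \sin\theta \cos\theta\, d\theta = 0$ from \eqref{assumprho} to eliminate the cross terms arising in the elementary two-dimensional rotation, together with the identity $\operatorname{tr}(A_k A_k^T) + \operatorname{tr}(B_k B_k^T) = M$, gives the scalar recursion
\[
\tau_k = (1-r)\,\tau_{k-1} + r\, \frac{M}{N+M}, \qquad r = \frac{\mu_\rho(N+M)}{N\Lambda}.
\]
With $\tau_0 = 1$ this solves to exactly the bracketed factor in \eqref{eq:estimatemarginals}. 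The principal obstacle is the sum rule itself: establishing that $\mathbb E[A_k A_k^T]$ is proportional to $I_M$ requires permutation symmetry among the $M$ system particles (to force a diagonal structure with equal diagonal entries) together with the vanishing of the common off-diagonal entry, for which the assumption \eqref{assumprho} is essential. A secondary technical issue is that Barthe's theorem is normally stated for finitely many data; the continuous $\underline\theta$-integration is handled by the approximation argument deferred to Appendix~\ref{sec: approximation}.
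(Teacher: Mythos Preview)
Your proposal is correct and follows essentially the same route as the paper: the partial bound \eqref{est:partial}, the identification $\sum_\sigma w_\sigma L_\sigma^T L_\sigma = A_kA_k^T$, the sum rule of Section~\ref{sec: proofofsumrule} to render the Brascamp--Lieb datum geometric, the entropy form of the correlation inequality (Theorem~\ref{thm: correlation}), and the discrete approximation of Appendix~\ref{sec: approximation}. Your scalar recursion for $\tau_k$ via trace conservation is a mild repackaging of the paper's $2\times 2$ matrix computation for $C_{k,M}$, but note that the sum rule already delivers that constant, so this step is redundant once you have cited Section~\ref{sec: proofofsumrule}.
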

As mentioned before, to prove 
Theorem~\ref{thm: entropyestimate}, we need a generalized version of an 
inequality by Han. This generalization was proven by Carlen-Cordero-Erausquin 
in \cite{CarlenCordero}. It is based on the geometric Brascamp-Lieb inequality 
due 
to Ball \cite{ball1}, see also \cite{ball2}, in the rank one case, and due to 
Barthe \cite{Barthe1} in the general case.
\begin{theorem}[Correlation inequality] \label{thm: correlation}
For $i=1, \dots K$,  let $H_i \subset \R^M$ be subspaces of dimension $d_i$ and 
$B_i: \R^M \rightarrow H_i$ be linear maps with the property that $B_i B_i^T= 
I_{H_i}$, the identity map on $H_i$.  Assume further that there are non-negative 
constants $c_i, i=1, \dots, K$ such that
\begin{equation} \label{eq: sumrule}
\sum_{i=1}^K c_i B_i^T B_i = I_M \ .
\end{equation}
Then, for nonnegative functions $f_i: H_i \rightarrow \R$,
\begin{equation} \label{eq: brascamplieb}
\int_{\R^M} \prod_{i=1}^K f_i^{c_i} (B_i \mathbf v) \, e^{-\pi |\mathbf v|^2}  
\, \mathrm{d} \mathbf v \le \prod_{i=1}^K \left( \int_{H_i} f_i( u) \, 
e^{-\pi |u|^2} \, \mathrm{d} u \right)^{c_i}  \ .
\end{equation}
Moreover, 
\begin{equation} \label{eq: entropymarginalstwo}
\int_{\R^M} h(\mathbf v) \log h(\mathbf v) \, e^{-\pi |\mathbf v|^2} \, \mathrm{d} \mathbf v
\ge \sum_{i=1}^K c_i \left[ \int_{\R^M} h(\mathbf v) \log f_i(B_i 
\mathbf v) \, e^{-\pi |\mathbf v|^2} \, \mathrm{d} \mathbf v- \log \int_{H_i} 
f_i( u) \, e^{-\pi {| u|}^2} \, \mathrm{d} u  
\right] , 
\end{equation}
for any nonnegative function $h \in  L^1(\R^M, e^{-\pi |\mathbf v|^2} d \mathbf 
v)$. 
\end{theorem}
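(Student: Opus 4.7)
The plan is to prove the two parts of the theorem in sequence: first the geometric Brascamp--Lieb inequality \eqref{eq: brascamplieb}, and then the entropic correlation bound \eqref{eq: entropymarginalstwo}, which will follow from \eqref{eq: brascamplieb} by a standard duality (Gibbs inequality) argument.

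For \eqref{eq: brascamplieb}, I would follow the mass transportation approach of Barthe \cite{Barthe1}. After normalizing $\int_{H_i} f_i(u)\, e^{-\pi|u|^2}\, \mathrm{d}u = 1$, let $T_i = \nabla \varphi_i \colon H_i \to H_i$ be the Brenier map pushing the standard Gaussian on $H_i$ forward onto $f_i(u)\, e^{-\pi|u|^2}\, \mathrm{d}u$, so that each $T_i$ is the gradient of a convex function and satisfies a Monge--Amp\`ere relation. One then defines the transport map $\Phi(\mathbf v) = \sum_i c_i B_i^T T_i(B_i \mathbf v)$ on $\R^M$; the sum rule \eqref{eq: sumrule} combined with $B_i B_i^T = I_{H_i}$ produces the pointwise determinantal and quadratic--form estimates that, after a change of variables in $\int \prod f_i(B_i \mathbf v)^{c_i}\, e^{-\pi|\mathbf v|^2}\, \mathrm{d}\mathbf v$, collapse to $\prod_i \|f_i\|_1^{c_i}$. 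The full argument is deferred to Appendix~\ref{sec: bl}; an alternative via heat--flow semigroup monotonicity in the Carlen--Lieb--Loss style is also available.

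Once \eqref{eq: brascamplieb} is in hand, the entropic form \eqref{eq: entropymarginalstwo} is obtained by the following short duality. On the probability space $(\R^M, e^{-\pi|\mathbf v|^2}\, \mathrm{d}\mathbf v)$, for any nonnegative $h$ with $\int h\, e^{-\pi|\mathbf v|^2}\, \mathrm{d}\mathbf v = 1$ and any nonnegative $g$, Jensen's inequality applied to $\log(g/h)$ gives the Gibbs bound
\[
    \int_{\R^M} h \log h\; e^{-\pi|\mathbf v|^2}\, \mathrm{d}\mathbf v \;\ge\; \int_{\R^M} h \log g\; e^{-\pi|\mathbf v|^2}\, \mathrm{d}\mathbf v \;-\; \log \int_{\R^M} g\; e^{-\pi|\mathbf v|^2}\, \mathrm{d}\mathbf v .
\]
Choosing $g(\mathbf v) = \prod_{i=1}^K f_i(B_i \mathbf v)^{c_i}$ turns the first term on the right into $\sum_i c_i \int h \log f_i(B_i \mathbf v)\, e^{-\pi|\mathbf v|^2}\, \mathrm{d}\mathbf v$, while \eqref{eq: brascamplieb} gives $\log \int g\, e^{-\pi|\mathbf v|^2}\, \mathrm{d}\mathbf v \le \sum_i c_i \log \int_{H_i} f_i(u)\, e^{-\pi|u|^2}\, \mathrm{d}u$. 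Substituting the latter into the former yields \eqref{eq: entropymarginalstwo}. The unit--mass normalization of $h$ is harmless in the paper's application, where $h = h_0$ corresponds to the probability density $f_0$ with respect to Gaussian measure; the general case reduces to this one by rescaling.

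The main obstacle is \eqref{eq: brascamplieb} itself, since that is where the two structural hypotheses $B_i B_i^T = I_{H_i}$ and $\sum_i c_i B_i^T B_i = I_M$ must interact precisely to produce the correct determinantal and quadratic--form estimates along the transport map $\Phi$. The deduction of \eqref{eq: entropymarginalstwo} from \eqref{eq: brascamplieb} is a one--step duality, requiring only the routine regularization (smoothing and truncation of the $f_i$) needed to guarantee existence and regularity of the Brenier potentials and finiteness of all integrals; this is the content of Appendix~\ref{sec: approximation}.
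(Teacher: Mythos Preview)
Your deduction of \eqref{eq: entropymarginalstwo} from \eqref{eq: brascamplieb} via the Gibbs variational principle is exactly the argument the paper gives (following Carlen--Cordero-Erausquin). For \eqref{eq: brascamplieb} itself, however, you take a genuinely different route: you propose Barthe's mass-transport proof using Brenier maps, while the paper instead runs the heat-flow monotonicity argument in the style of Carlen--Lieb--Loss and Bennett--Carbery--Christ--Tao. Concretely, the paper evolves each $f_i$ under the heat semigroup, checks that the right-hand side is conserved, computes the $t$-derivative of the left-hand side, and reduces positivity of that derivative to the quadratic inequality $\bigl|\sum_i c_i B_i^T V_i\bigr|^2 \le \sum_i c_i |V_i|^2$, which follows from Cauchy--Schwarz together with the sum rule \eqref{eq: sumrule}; the inequality then drops out in the $t\to\infty$ limit. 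Both approaches are valid and both use the two structural hypotheses in the same essential place. The heat-flow argument is somewhat more self-contained here, since it avoids invoking Brenier's theorem and the attendant regularity issues for the potentials $\varphi_i$; your transport argument, on the other hand, is closer to Barthe's original and tends to give sharper information about equality cases.

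One small correction: Appendix~\ref{sec: approximation} in the paper is not about smoothing or truncating the $f_i$ for the Brascamp--Lieb step; it constructs the discrete approximations $\nu_K$ of the angular measure $\rho(\theta)\,\mathrm{d}\theta$ needed so that Theorem~\ref{thm: correlation} can be applied to the (finite) sum in \eqref{est:partial}. The regularization you have in mind would be a separate, standard approximation argument.
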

Since Theorem \ref{thm: correlation} is very useful  in a number of 
applications, and for the readers convenience, we will give an elementary proof 
in Appendix \ref{sec: bl}.  
\begin{remark} By taking the trace in \eqref{eq: sumrule} one sees that
\begin{align*}
\sum_{i=1}^K c_i d_i = M \ .
\end{align*}
\end{remark}

We would like to apply \eqref{eq: entropymarginalstwo} to the right hand side 
of \eqref{est:partial}. An immediate problem is that \eqref{est:partial} is in terms of integrals and not sums. While 
there are some results available for continuous indices (see, e.g., 
\cite{Barthe2}), they do not apply to our situation and hence we will take a 
more direct approach and approximate
the measure $\rho(\theta) \mathrm{d} \theta$ by a discrete measure. It is 
important that the approximation also satisfies the constraint 
\eqref{assumprho}. The following lemma establishes such an approximation. Its 
proof is given in Appendix \ref{sec: approximation}.

\begin{lemma} \label{lem: approximation} 
Let $\rho$ be a probability density on $[-\pi, \pi]$ whose Fourier series 
converges absolutely and assume that \eqref{assumprho} is satisfied. There 
exists a sequence of discrete probability measures $\nu_K$, $K =1,2, \dots$, 
such that for every continuous function $f$ on $[-\pi,\pi]$
\begin{align*}
\lim_{K \to \infty} \int_{-\pi}^\pi f(\theta) \, \nu_K(\mathrm{d} \theta) = 
\int_{-\pi}^\pi f(\theta) \rho(\theta) \, \mathrm{d} \theta \ . 
\end{align*}
Moreover, 
\begin{align*}
\int_{-\pi}^\pi \cos \theta \sin \theta \, \nu_K(\mathrm{d} \theta)  = 0 \ , 
\end{align*}
for all $K \in \N$. More precisely, 
\begin{align*}
\nu_K( \mathrm{d} \theta) = \frac{2\pi}{4K+1} \sum_{\ell = -2K}^{2K} 
\rho_K\left(\frac{2 \pi \ell}{4K+1}\right) \delta\left(\theta - \frac{2\pi 
\ell}{4K+1}\right) \, \mathrm{d} \theta \ , 
\end{align*}
where
\begin{align*}
\rho_K(\theta) = \int_{-\pi}^\pi \rho(\theta -\phi)\ p_K(\theta)
\, \mathrm{d} \phi
\ \text{ and } \ p_K(\theta) := \frac{1}{2K+1} \left( \sum_{k= - K}^K e^{i 
k\theta}\right)^2 \ .
\end{align*}
\end{lemma}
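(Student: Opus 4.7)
The plan is to verify three things about $\nu_K$ in sequence: that it is a probability measure, that it satisfies the vanishing moment condition, and that it converges weakly to $\rho(\theta)\,\mathrm{d}\theta$. All three will rest on properties of the Fejér-type kernel $p_K$ together with the exactness of the $(4K{+}1)$-point equispaced quadrature rule for trigonometric polynomials.

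First I would record the properties of $p_K$. From its definition as the squared modulus of a symmetric exponential sum divided by $2K+1$, $p_K$ is nonnegative, even in $\theta$, and a trigonometric polynomial of degree at most $2K$. A direct computation of the coefficients of $(\sum_{k=-K}^K e^{ik\theta})^2$ shows that $\widehat{p_K}(m) = 1 - |m|/(2K+1)$ for $|m|\le 2K$ and vanishes otherwise, so in particular $\tfrac{1}{2\pi}\int p_K = 1$. Hence $\rho_K = \rho \ast p_K$ inherits nonnegativity, degree at most $2K$, and the correct normalization. The second ingredient is the exact quadrature identity: for $|m|\le 4K$ one has $\tfrac{2\pi}{4K+1}\sum_{\ell=-2K}^{2K} e^{im\cdot 2\pi\ell/(4K+1)} = \int_{-\pi}^\pi e^{im\theta}\,\mathrm{d}\theta$, so by linearity the quadrature rule reproduces the integral for any trigonometric polynomial of degree at most $4K$. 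Applied to $g = \rho_K$ (degree $\le 2K$), this gives total mass $1$, and together with $\rho_K\ge 0$ it follows that $\nu_K$ is a probability measure.

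For the vanishing moment, the product $\sin\theta\cos\theta\,\rho_K(\theta)$ is a trigonometric polynomial of degree at most $2K+2 \le 4K$ (for $K\ge 1$), so quadrature exactness gives
\[
\int_{-\pi}^\pi \sin\theta\cos\theta\,\nu_K(\mathrm{d}\theta) = \int_{-\pi}^\pi \sin\theta\cos\theta\,\rho_K(\theta)\,\mathrm{d}\theta.
\]
Unrolling the convolution and substituting $\psi = \theta-\phi$ converts the right-hand side into a double integral, and expanding $\sin(\psi+\phi)\cos(\psi+\phi) = \tfrac{1}{2}\sin(2\psi)\cos(2\phi) + \tfrac{1}{2}\cos(2\psi)\sin(2\phi)$ splits it into two products. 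The first is killed by the hypothesis \eqref{assumprho} via the factor $\int \sin(2\psi)\rho(\psi)\,\mathrm{d}\psi = 0$, while the second is killed by evenness of $p_K$ via the factor $\int p_K(\phi)\sin(2\phi)\,\mathrm{d}\phi = 0$.

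For weak convergence, I would first establish $\rho_K \to \rho$ uniformly. Since $\rho$ has absolutely convergent Fourier series and $\widehat{\rho_K}(m) = (1-|m|/(2K+1))\widehat{\rho}(m)$ for $|m|\le 2K$ and zero otherwise, dominated convergence applied to the Fourier series yields $\|\rho_K - \rho\|_\infty \to 0$. For any continuous $f$ I would then split
\[
\int_{-\pi}^\pi f\,\nu_K = \tfrac{2\pi}{4K+1}\sum_{\ell=-2K}^{2K} f\!\left(\tfrac{2\pi\ell}{4K+1}\right)\rho\!\left(\tfrac{2\pi\ell}{4K+1}\right) + \tfrac{2\pi}{4K+1}\sum_{\ell} f\!\left(\tfrac{2\pi\ell}{4K+1}\right)(\rho_K-\rho)\!\left(\tfrac{2\pi\ell}{4K+1}\right),
\]
with the first term a Riemann sum converging to $\int f\rho\,\mathrm{d}\theta$ and the second bounded by $2\pi\|f\|_\infty\|\rho_K-\rho\|_\infty \to 0$. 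The main obstacle is keeping the bookkeeping in the quadrature exactness and the parity argument clean; everything else is a standard Fejér-kernel approximation argument.
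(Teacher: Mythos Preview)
Your proof is correct and follows essentially the same approach as the paper: both rely on the Fej\'er-type kernel $p_K$, the fact that $\rho_K$ is a nonnegative trigonometric polynomial of degree at most $2K$, and the discrete orthogonality relation on the $(4K{+}1)$ equispaced nodes (which you phrase as quadrature exactness and the paper phrases as $\widehat{\nu_K}(m)=\widehat{\rho_K}(m)$ for $|m|\le 2K$). The only cosmetic difference is in the vanishing-moment step: the paper reads it off directly from $\widehat{\rho_K}(\pm 2)=(1-\tfrac{2}{2K+1})\widehat{\rho}(\pm 2)$ and $\widehat{\rho}(2)-\widehat{\rho}(-2)=0$, whereas you unroll the convolution and use parity of $p_K$ together with \eqref{assumprho}; both arguments are equivalent, and your weak-convergence argument via $\|\rho_K-\rho\|_\infty\to 0$ plus Riemann sums supplies detail that the paper omits.
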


% At this point, let us quickly demonstrate the finiteness of the appearing 
% sums. 
At this point we can prepare the ground for the application of Theorem 
\ref{thm: correlation} to inequality \eqref{est:partial}. We first replace 
$\rho(\theta)\mathrm{d} 
\theta$ in \eqref{est:partial} with $\nu_K(\mathrm{d} \theta)$. Setting
\begin{equation*} \label{convention}
\omega_{\ell_j}=\rho_K( \theta_j)  \ , \ \theta_{\ell_j} =\frac{2\pi \ell_j 
}{4K+1} \ , \ \text{and } \underline \theta = (\theta_{\ell_1}, \dots, 
\theta_{\ell_k}) \ ,
\end{equation*}
we obtain
\begin{align}
& \sum_{\alpha_1, \dots, \alpha_k} \lambda_{\alpha_1} \cdots \lambda_{\alpha_k} 
\int_{[-\pi,\pi]^k} \nu_K(\mathrm{d} \theta_1) \cdots \nu_K(\mathrm{d} \theta_k) 
\sum_{\sigma \subset \{1, \dots, M\}} \prod_{i \in \sigma^{\rm c}} 
\gamma_{k,i}(\underline \alpha, \underline \theta)^2 \prod_{j \in \sigma} 
\left(1 - \gamma_{k,j}(\underline \alpha, \underline \theta)^2\right) \times 
\nonumber\\ 
& \qquad\qquad\qquad \times \int_{\R^M} h_0( \mathbf v) \log h_{0,U_k(\underline 
\alpha, \underline \theta)}^\sigma(P_{\sigma^{\rm c}}U_k(\underline \alpha, 
\underline \theta)^T\mathbf v) \, e^{-\pi |\mathbf v|^2} \, \mathrm{d} \mathbf v 
\nonumber\\  
& = \sum_{\alpha_1, \dots, \alpha_k} \lambda_{\alpha_1} \cdots 
\lambda_{\alpha_k} \sum_{-K \le \ell_1, \dots, \ell_k \le K } \prod_{j=1}^k 
\omega_{\ell_j}  \sum_{\sigma \subset \{1, \dots, M\}}   \prod_{i \in 
\sigma^{\rm c}} \gamma_{k,i}(\underline \alpha, \underline \theta)^2 \prod_{j 
\in \sigma} \left(1 - \gamma_{k,j}(\underline \alpha, \underline 
\theta)^2\right) \times \nonumber\\ 
& \qquad\qquad\qquad \times \int_{\R^M} h_0( \mathbf v) \log h_{0,U_k(\underline 
\alpha, \underline \theta)}^\sigma(P_{\sigma^{\rm c}}U_k(\underline \alpha, 
\underline \theta)^T\mathbf v) \, e^{-\pi |\mathbf v|^2} \, \mathrm{d} \mathbf v 
 \ . \label{eq:discretesum}
\end{align}

In order to apply Theorem \ref{thm: correlation} to \eqref{eq:discretesum} we 
have to replace the sum over the index~$i$ with a sum over the indices 
$\alpha_1, \dots, \alpha_k, \ell_1, \dots \ell_k$ and all subsets $\sigma 
\subset \{1, \dots, M\}$. Moreover, we substitute
\begin{align*}
& \text{the constants } c_i & & \text{ by }  \frac{1}{C_{k,M}}\lambda_{\alpha_1} 
\cdots \lambda_{\alpha_k}  \prod_{j=1}^k \omega_{\ell_j} \prod_{i \in 
\sigma^{\rm c}} \gamma_{k,i}(\underline \alpha, \underline \theta)^2 \prod_{j 
\in \sigma} (1 - \gamma_{k,j}(\underline \alpha, \underline \theta)^2) \ , \\ 
& \text{the functions } \vphantom{\prod_{j \in \sigma}} f_i (\mathbf w) & & 
\text{ by }  h_{0,U_k(\underline \alpha, \underline \theta)}^\sigma( \mathbf w)  
\ , \\ 
& \text{the linear maps } \vphantom{\prod_{j \in \sigma}} B_i & & \text{ by }  
P_{\sigma^{\rm c}}U_k(\underline \alpha, \underline \theta)^T \ , \\ 
& \text{the functions } \vphantom{\prod_{j \in \sigma}} f_i(B_i\mathbf v) & & 
\text{ by } h_{0,U_k(\underline \alpha, \underline 
\theta)}^\sigma(P_{\sigma^{\rm c}}U_k(\underline \alpha, \underline 
\theta)^T\mathbf v) \ , \\ 
& \text{and the subspaces } \vphantom{\prod_{j \in \sigma}}  H_i & & \text{ by } 
\R^{\vert\sigma^c\vert}  \ . 
\end{align*}
For any given index $i$ the condition $B_i B_i^T =I_{H_i}$ corresponds to 
$P_{\sigma^{\rm c}}U_k(\underline \alpha, \underline \theta)^T U_k(\underline 
\alpha, \underline \theta) P_{\sigma^{\rm c}}~=~P_{\sigma^{\rm c}}$ which is the 
identity on $\R^{\vert\sigma^{\rm c}\vert}$. 

The next theorem establishes the sum rule \eqref{eq: sumrule} in our setting  
and hence ensures the applicability of Theorem~\ref{thm: correlation} to 
\eqref{eq:discretesum}.
% {\color{red}to prove the entropy bound Theorem~\ref{thm: entropyestimate}}. We 
% state it and explain afterwards how it relates to Theorem \ref{thm: 
% correlation}.
\begin{theorem}[The sum rule] \label{thm: Sumrule} 
If $\nu(\mathrm d \theta)$ is a probability measure satisfying 
\eqref{assumprho}, then
\begin{equation} \label{eq: sumruletwo}
\begin{multlined} 
\sum_{\alpha_1, \dots, \alpha_k} \lambda_{\alpha_1}\cdots \lambda_{\alpha_k} 
\int_{[-\pi,\pi]^k}  \nu(\mathrm{d} \theta_1)\, \cdots \nu( \mathrm{d} 
\theta_k)  \times \\
\times \sum_{\sigma \subset \{1, \dots, M\}} \prod_{i 
\in \sigma^{\rm c}} \gamma_{k,i}(\underline \alpha, \underline \theta)^2 
\prod_{j \in \sigma} \left(1 - \gamma_{k,j}(\underline \alpha, \underline 
\theta)^2\right)   U_k(\underline \alpha, \underline \theta)P_{\sigma^{\rm 
c}}^T 
P_{\sigma^{\rm c}}U_k(\underline \alpha, \underline \theta)^T = C_{k,M} I_M   
\ , 
\end{multlined}
\end{equation}
where 
\begin{align*}
C_{k,M} = \left[ \frac{M}{N+M} 
+\frac{N}{N+M}\left(1-\mu_\nu\frac{N+M}{N\Lambda}\right)^k\right] 
\end{align*} 
with
\[
 \mu_\nu=\mu \int \nu(\mathrm d \theta)\sin^2\theta \ .
\]

\end{theorem}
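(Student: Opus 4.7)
The plan is to first recognize that the sum over $\sigma$ on the left of \eqref{eq: sumruletwo} collapses exactly to $A_k(\underline\alpha,\underline\theta)A_k(\underline\alpha,\underline\theta)^T$, and then to compute the resulting expectation via a simple Markov-chain-type recursion on diagonal matrices.

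For the first step, notice that $P_{\sigma^{\rm c}}^T P_{\sigma^{\rm c}}$ is the diagonal matrix whose $\ell$-th entry is $\mathbb{1}_{\ell \in \sigma^{\rm c}}$. Using the SVD \eqref{eq:svdA} and the elementary identity
\[
\sum_{\sigma \subset \{1,\dots,M\}} \prod_{i \in \sigma^{\rm c}} \gamma_{k,i}^2 \prod_{j \in \sigma}\bigl(1-\gamma_{k,j}^2\bigr)\, \mathbb{1}_{\ell \in \sigma^{\rm c}} = \gamma_{k,\ell}^2,
\]
obtained by pulling out the factor $\gamma_{k,\ell}^2$ and summing the remaining product over $\sigma \subset \{1,\dots,M\}\setminus\{\ell\}$ (which telescopes to $1$), one sees that the entire $\sigma$-sum in \eqref{eq: sumruletwo} equals $U_k \Gamma_k^2 U_k^T = A_k A_k^T$. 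So the claim reduces to $\mathbb{E}\bigl[A_k A_k^T\bigr] = C_{k,M} I_M$, with expectation taken against $\lambda_{\alpha_j}$ and $\nu(\mathrm{d}\theta_j)$.

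For the second step, write $A_k = P_V R_k P_V^T$, where $P_V \colon \R^{M+N}\to \R^M$ is the projection onto the system block and $R_k = \bigl[\prod_{j=1}^k r_{\alpha_j}(\theta_j)\bigr]^{-1}$. Then $A_k A_k^T = P_V R_k Q R_k^T P_V^T$ with $Q := P_V^T P_V = \mathrm{diag}(1,\dots,1,0,\dots,0)$ ($M$ ones and $N$ zeros). Peeling off the outermost single-step rotation gives the recursion $\mathbb{E}[R_k Q R_k^T] = L^k(Q)$, where
\[
L(X) := \sum_\alpha \lambda_\alpha \int_{-\pi}^\pi \nu(\mathrm{d}\theta)\, r_\alpha(\theta)^{-1} X r_\alpha(\theta)^{-T}.
\]
A direct computation in the $(i,j)$-plane shows that for any diagonal $D$, the off-diagonal contribution of $r_{(i,j)}(\theta)^{-1} D r_{(i,j)}(\theta)^{-T}$ is proportional to $\sin\theta\cos\theta$; therefore the assumption \eqref{assumprho} is precisely what makes $L$ preserve the space of diagonal matrices.

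On diagonals, $L$ acts by $a_\ell \mapsto a_\ell + \mu_\nu^{(1)} \sum_{j\ne\ell}\lambda_{(\ell,j)}(a_j - a_\ell)$ with $\mu_\nu^{(1)} := \int \sin^2\theta\, \nu(\mathrm{d}\theta) = \mu_\nu/\mu$. Since $Q$ is invariant under permutations inside $\{1,\dots,M\}$ and inside $\{M+1,\dots,M+N\}$, and $L$ commutes with such permutations, we must have $L^k(Q) = \mathrm{diag}(a_k,\dots,a_k, b_k,\dots,b_k)$. Inserting the rates of \eqref{eq:model}, only the system--reservoir pairs with weight $\mu/(\Lambda N)$ contribute to the difference, giving
\[
d_{k+1} = \Bigl(1 - \tfrac{\mu_\nu(N+M)}{N\Lambda}\Bigr) d_k, \qquad d_k := a_k - b_k,
\]
with $d_0 = 1$. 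Combining this with the conservation $M a_k + N b_k = \mathrm{tr}\, L^k(Q) = \mathrm{tr}\, Q = M$ yields $a_k = \frac{M}{N+M} + \frac{N}{N+M}\bigl(1 - \mu_\nu(N+M)/(N\Lambda)\bigr)^k = C_{k,M}$, so the top-left $M\times M$ block of $L^k(Q)$ is $C_{k,M} I_M$, which is the statement.

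The main obstacle, in my view, is not computational but conceptual: reformulating the rather baroque sum in \eqref{eq: sumruletwo} as $\mathbb{E}[A_k A_k^T]$, and then recognizing that \eqref{assumprho} is exactly the hypothesis that closes the associated matrix-valued Markov dynamics on diagonal matrices, reducing everything to the scalar two-state recursion that produces the advertised exponential factor.
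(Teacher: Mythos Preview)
Your proof is correct and follows essentially the same route as the paper: collapse the $\sigma$-sum to $A_kA_k^T$ via the SVD, rewrite this as the top-left block of the conjugation $\bigl[\prod r_{\alpha_j}(\theta_j)\bigr]^{-1}\mathrm{diag}(I_M,0)\bigl[\prod r_{\alpha_j}(\theta_j)\bigr]$, show that the one-step average preserves the block-scalar form $\mathrm{diag}(m_1 I_M, m_2 I_N)$ thanks to \eqref{assumprho}, and iterate a $2\times2$ recursion. The only cosmetic differences are that the paper computes the one-step action on block-scalar matrices directly (rather than first on diagonals and then invoking permutation symmetry), and it diagonalizes the $2\times2$ transition matrix $\mathcal{P}$ explicitly instead of combining the difference recursion with trace conservation; these are equivalent computations.
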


The proof will be given in Section \ref{sec: proofofsumrule}. We observe here 
that it follows from Theorem \ref{lem: approximation}  that 
$\mu_\rho=\lim_{K\to\infty}\mu_{\nu_K}$.

\begin{proof}[Proof of Theorem \ref{thm: entropyestimate} ]
First we consider the case where $\rho$ is repaced by $\nu_K$ and use Theorem 
\ref{thm: correlation} together with Theorem \ref{thm: Sumrule} and the 
identification rules described above. 
% We return to 
% \eqref{eq: entropymarginalstwo} and pick $\smash{h_{0,U_k(\underline \alpha, 
% \underline \theta)}^\sigma}$ to be the function that corresponds to $f_i$. 
The entropy inequality \eqref{eq: entropymarginalstwo} now says that 
\begin{align*}
& \int_{\R^M} h_0(\mathbf v) \log h_0(\mathbf v) e^{-\pi |\mathbf v|^2} \, 
\mathrm{d} \mathbf v  \\ 
& \ge  \frac{1}{C_{k,M}}  \sum_{\alpha_1, \dots, \alpha_k} \lambda_{\alpha_1} 
\cdots \lambda_{\alpha_k} \sum_{-K \le \ell_1, \dots, \ell_k \le K } 
\prod_{j=1}^k \omega_{\ell_j} \sum_{\sigma \subset \{1, \dots, M\}} \prod_{i \in 
\sigma^{\rm c}} \gamma_{k,i}(\underline \alpha, \underline \theta)^2 \prod_{j 
\in \sigma}  \left(1 - \gamma_{k,j}(\underline \alpha, \underline 
\theta)^2\right) \times  \\ 
& \qquad \times \left[ \int_{\R^M} h_0( \mathbf v) \log h_{0,U_k(\underline 
\alpha, \underline \theta)}^\sigma(P_{\sigma^{\rm c}}U_k(\underline \alpha, 
\underline \theta)^T\mathbf v) \, e^{-\pi |\mathbf v|^2} \, \mathrm{d} \mathbf v 
- \log \int_{\R^{|\sigma^{\rm c}|}}  
h_{0,U_k(\underline \alpha, \underline \theta)}^\sigma( u) \, e^{-\pi 
| u|^2} \, \mathrm{d} u \right] \ . 
\end{align*}
However, since $h_0$ is normalized and $U_k(\underline \alpha, \underline 
\theta)$ is orthogonal, we find that 
\begin{align*}
\int_{\R^{\vert\sigma^{\rm c}\vert}} h_{0,U_k(\underline \alpha, \underline 
\theta)}^\sigma( u) \, e^{-\pi |u|^2} \, \mathrm{d} \ u & = 
\int_{\R^{\vert\sigma^{\rm c}\vert}} \int_{\R^{\vert\sigma\vert}} 
h_{0,U_k(\underline \alpha, \underline \theta)}( v,  u) \, 
 e^{-\pi | v|^2} \, \mathrm{d}  v  \, e^{-\pi |u|^2} \, 
\mathrm{d} u \\  
& = \int_{\R^M} h_0(U_k(\underline \alpha, \underline \theta)  \mathbf v) \, 
e^{-\pi| \mathbf v|^2} \, \mathrm{d}  \mathbf v \\ 
& = 1 \ .
\end{align*}
Thus we find
\begin{align}\label{est:descrete}
 \sum_{\alpha_1, \dots, \alpha_k} \lambda_{\alpha_1} \cdots 
\lambda_{\alpha_k} \sum_{-K \le \ell_1, \dots, \ell_k \le K } \prod_{j=1}^k 
\omega_{\ell_j}  \sum_{\sigma \subset \{1, \dots, M\}}   \prod_{i \in 
\sigma^{\rm c}} \gamma_{k,i}(\underline \alpha, \underline \theta)^2 \prod_{j 
\in \sigma} \left(1 - \gamma_{k,j}(\underline \alpha, \underline 
\theta)^2\right) \times & \\ 
 \times \int_{\R^M} h_0( \mathbf v) \log 
h_{0,U_k(\underline 
\alpha, \underline \theta)}^\sigma(P_{\sigma^{\rm c}}U_k(\underline \alpha, 
\underline \theta)^T\mathbf v) \, e^{-\pi |\mathbf v|^2} \, \mathrm{d} \mathbf 
v \,  \leq & \, C_{k,M}\mathcal S(h_0)\ .\nonumber
\end{align}
% Thus, using Theorem \ref{thm: entropypartial},
% \begin{align*}
% \mathcal S(h_k) \le C_{k,M} \mathcal S(h_0)  \ ,
% \end{align*}
% for the case where the density $\rho(\theta)$ is replaced by the discrete measure $\nu_K$.
As $K \to \infty$, the left-hand side of \eqref{est:descrete} converges to the 
right-hand side of \eqref{est:partial}. 
\end{proof}

We now have all ingredients to give the proof of Theorem~\ref{thm: main}.

\begin{proof}[Proof of Theorem   \ref{thm: main}]
Recall from  Section~\ref{sec: representation}, that 
\begin{align*}
f(\mathbf v, t) = e^{-\pi \vert \mathbf v \vert^2}e^{-\Lambda t} \sum_{k = 0}^{\infty} \frac{t^k \Lambda^k}{k!} h_k(\mathbf v)\ ,
\end{align*}
and that $S(f(\cdot, t)) = \mathcal{S}(h(\cdot, t))$. Combining Theorem~\ref{thm: entropypartial} and Theorem~\ref{thm: entropyestimate}, we obtain
\begin{align*}
\mathcal{S}(h_k) \le C_{k,M}
\mathcal S(h_0) \ ,
\end{align*}
and computing 
\begin{align*}
e^{-\Lambda t} \sum_{k=0}^\infty \frac{\Lambda^k t^k}{k!} C_{k,M}
\end{align*}
yields Theorem~\ref{thm: main}.
\end{proof}

\section{\bf The sum rule. Proof of Theorem \ref{thm: Sumrule}}  \label{sec: proofofsumrule}
We have to compute the matrix
\begin{multline*}
Z := \sum_{\alpha_1, \dots, \alpha_k} \lambda_{\alpha_1} \cdots 
\lambda_{\alpha_k} \int_{[-\pi,\pi]^k} \nu(\mathrm{d} \theta_1) 
\, \cdots \nu(\mathrm{d} \theta_k) \times \\ 
\times \sum_{\sigma \subset \{1, \dots, M\}} \prod_{i \in \sigma^{\rm c}} 
\gamma_{k,i}(\underline \alpha, \underline \theta)^2 \prod_{j \in \sigma} 
\left(1 - \gamma_{k,j}(\underline \alpha, \underline \theta)^2\right)  
U_k(\underline \alpha, \underline \theta)P_{\sigma^{\rm c}}^T P_{\sigma^{\rm 
c}}U_k(\underline \alpha, \underline \theta)^T \ . 
\end{multline*}
Obviously $P_{\sigma^{\rm c}}^T P_{\sigma^{\rm c}} = P_{\sigma^{\rm c}}$ and hence
\begin{multline*}
Z = \sum_{\alpha_1, \dots, \alpha_k} \lambda_{\alpha_1} \cdots 
\lambda_{\alpha_k}  \int_{[-\pi,\pi]^k}  \nu(\mathrm{d} \theta_1)  \cdots 
\nu(\mathrm{d} \theta_k)  \, \times \\ 
\times U_k(\underline \alpha, \underline \theta) \left[ \sum_{\sigma \subset 
\{1, \dots, M\}}  \prod_{i \in \sigma^{\rm c}}\gamma_{k,i}(\underline \alpha, 
\underline \theta)^2 \prod_{j \in \sigma} \left(1 - \gamma_{k,j}(\underline 
\alpha, \underline \theta)^2\right)P_{\sigma^{\rm c}}\right] U_k(\underline 
\alpha, \underline \theta)^T 
\ .
\end{multline*} 
The sum on $\sigma$ is easily evaluated and yields the matrix 
$\Gamma_k^2(\underline \alpha, \underline \theta)$. Hence, recalling the 
orthogonal singular value decomposition \eqref{eq:svdA} of $A_k(\underline 
\alpha, \underline \theta)$, that is, $A_k(\underline \alpha, \underline \theta) 
=U_k(\underline \alpha, \underline \theta) \Gamma_k (\underline \alpha, 
\underline \theta) V^T_k(\underline \alpha, \underline \theta) $, we find that
\begin{equation} \label{average}
Z =  \sum_{\alpha_1, \dots, \alpha_k} \lambda_{\alpha_1} \cdots 
\lambda_{\alpha_k} \int_{[-\pi,\pi]^k} \nu(\mathrm{d} \theta_1)  \cdots 
\nu(\mathrm{d} \theta_k) \, A_k(\underline \alpha, \underline 
\theta)A^T_k(\underline \alpha, \underline \theta) \ .
\end{equation}
One can think about this expression in the following fashion. Recall that
\begin{align*}
\left[\prod_{l=1}^k r_{\alpha_l}(\theta_l)\right]^{-1} = 
\begin{pmatrix}  
A_k(\underline \alpha, \underline \theta) &  B_k(\underline \alpha, \underline 
\theta) \\  C_k(\underline \alpha, \underline \theta) &  D_k(\underline \alpha, 
\underline \theta)\end{pmatrix} \ . 
\end{align*}
With this notation, the matrix $Z$ equals the top left entry of the matrix
\begin{align*}
\sum_{\alpha_1, \dots, \alpha_k} \lambda_{\alpha_1} \cdots 
\lambda_{\alpha_k} \int_{[-\pi,\pi]^k} \nu(\mathrm{d} \theta_1)  \cdots 
\nu(\mathrm{d} \theta_k) \, \left[\prod_{l=1}^k 
r_{\alpha_l}(\theta_l)\right]^{-1}  \begin{pmatrix} I_M & 0 \\ 0 & 
0\end{pmatrix} \left[\prod_{l=1}^k r_{\alpha_l}(\theta_l)\right] \ . 
\end{align*}
The computation hinges on a repeated application of the elementary identity 
\begin{align*}
& \int_{-\pi}^{\pi} \nu(\mathrm{d}\theta) \,  
\begin{pmatrix}\cos(\theta) & -\sin(\theta)\\ \sin(\theta)  & 
\cos(\theta)\end{pmatrix}
\begin{pmatrix} m_1 & 0\\ 0 & m_2\end{pmatrix}
\begin{pmatrix} \cos(\theta) & \sin(\theta)\\ -\sin(\theta) & 
\cos(\theta)\end{pmatrix} \\
%& \redremark{$= \int \rho(\theta) \mathrm{d}\theta  \begin{pmatrix}
%\nu_1\cos^2(\theta)+\nu_2\sin^2(\theta) & 
%(\nu_1-\nu_2)\sin(\theta)\cos(\theta)\\
%(\nu_1-\nu_2)\sin(\theta)\cos(\theta) & \nu_1\sin^2(\theta)+\nu_2\cos^2(\theta)
%\end{pmatrix}$} \\
& = \begin{pmatrix} (1-\tilde\nu) m_1+\tilde \nu m_2 & 0\\
0 & (1-\tilde\nu) m_2+\tilde \nu m_1 \end{pmatrix} \ ,
\end{align*}
where $\tilde\nu=\int \nu(\mathrm{d} \theta) \, \sin^2(\theta).$
For this to be true we just need \eqref{assumprho}.
%Now, recall that $\mu_\rho=\mu\tilde\rho$. 
We easily check that for the 
rotations $r_\alpha(\theta)$ 
\begin{align} \label{eq:proofsumrule}
& \sum_{\alpha} \lambda_{\alpha} \int_{-\pi}^{\pi} \nu(\mathrm{d} \theta) \, 
r_\alpha(\theta)^{-1}  \begin{pmatrix}  m_1 I_M & 0 \\ 0 &  
m_2 I_N\end{pmatrix}   r_\alpha(\theta)\notag \\ 
& = \frac{1}{\Lambda}\left(\frac{M\lambda_S}{2}+\frac{N\lambda_R}{2}\right)
\begin{pmatrix}  m_1 I_M & 0 \\ 0 &  
m_2 I_N\end{pmatrix} \notag \\
& \qquad + \frac{\mu}{\Lambda N}\begin{pmatrix}   
N(M-1)+N((1-\tilde\nu) m_1 + \tilde\nu m_2)I_M & 0 \\0 & (N-1)M+ 
 M(\tilde\nu m_1 +(1-\tilde\nu)m_2) I_N \end{pmatrix} \notag 
\\
& = \begin{pmatrix}  m_1 I_M & 0 \\ 0 &  
m_2 I_N\end{pmatrix}+ \frac{\mu_\nu}{\Lambda N}
\begin{pmatrix}  
N( m_2-m_1)I_M & 0 \\
0 &  M(m_1 -m_2)I_N
    \end{pmatrix} \ .
\end{align}
where $\mu_\nu=\tilde \nu \mu$. Denote by $L(\nu_1,\nu_2)$  the $(N+M)\times(N+M)$ matrix  
\begin{align*}
L(m_1,m_2)=\begin{pmatrix}  m_1 I_{M} & 0\\ 0 & 
m_2 I_N\end{pmatrix} \ ,
\end{align*}
and set 
\begin{align*}
\mathcal P=I_2 - \frac{\mu_\nu}{\Lambda N}\begin{pmatrix} 
                                            N & -N\\
                                            -M & M
                                           \end{pmatrix} \ .
\end{align*}
Then  \eqref{eq:proofsumrule} is recast as
\begin{align} \label{eq: proofsumrulereduced}
 \sum_{\alpha} \lambda_{\alpha} \int_{-\pi}^{\pi} \nu(\mathrm{d} \theta) \, 
r_\alpha(\theta)^{-1}L(m_1,m_2)r_\alpha(\theta)= L(m_1',m_2') \ ,
\end{align}
where 
\begin{align*}
\begin{pmatrix}  m_1' \\ m_2'\end{pmatrix}=
\mathcal P\begin{pmatrix}  m_1 \\ m_2\end{pmatrix} \ .
\end{align*}
By a repeated application of \eqref{eq: proofsumrulereduced} we obtain
\begin{align*}
& \sum_{\alpha_1, \dots, \alpha_k} \lambda_{\alpha_1} 
\cdots \lambda_{\alpha_k} \int_{[-\pi,\pi]^k} \nu(\mathrm{d}\theta_1) \, 
\cdots \nu(\mathrm{d}\theta_k) \, \left[\prod_{j=1}^k r_{\alpha_j}(\theta_j)\right]^T L(\underline m)\left[\prod_{j=1}^k r_{\alpha_j}(\theta_j)\right] = \vphantom{\sum_{\alpha_j}} L(\mathcal P^k\underline m) \ .
\end{align*}
Thus,
\[
Z =\left(\mathcal P^k\, \begin{pmatrix}  1 \\ 
0\end{pmatrix}\right)_1 I_M \ .
\]
It is easy to see that $\mathcal P$ has eigenvalues $\ell_1=1$ and 
$\ell_2=1-\mu_\nu (M+N)/(\Lambda N)$ with eigenvectors $\underline 
m_1=(1, 1)$ and $\underline m_2=(N,-M)^T/(M+N)$. Consequently, 
\[
 \begin{pmatrix}  1 \\ 0\end{pmatrix}=\frac{M}{N+M}\underline 
m_1+\underline m_2 \ ,
\]
which yields
\[
 \left(\mathcal P^k\, \begin{pmatrix}  1 \\ 
0\end{pmatrix}\right)_1=\frac{M}{N+M}+ \frac{N}{M+N}\left(1-\mu_\nu
\frac{M+N}{\Lambda N}\right)^k \ .
\]
This proves Theorem \ref{thm: Sumrule}. \hfill $\qed$

\section{\bf Boltzmann-Kac collisions} \label{sec: boltzmann}
In this section we show that the above results can also be extended, at least in 
a particular case, to three-dimensional Boltzmann-Kac collisions. 

Again we consider a system of $M$ particles coupled to a reservoir 
consisting of $N$ particles, but now with velocities 
$v_1,\dots,v_M$, $w_1,\dots,w_N\in\R^3$. The collisions between a pair of 
particles have to conserve energy and momentum, 
\begin{align*}
	z_i^2 + z_j^2 &= (z_i^*)^2 + (z_j^*)^2 \\
	z_i + z_j &= z_i^* + z_j^* \ ,
\end{align*}
where $z$ can be either the velocity of a system particle $v$ or of a reservoir 
particle $w$. A convenient parametrization of the post-collisional velocities in 
terms of the velocities before the collision is given by 
\begin{align*}
	z_i^*(\omega) &= z_i -  \omega \cdot(z_i - z_j)  \, \omega\\
	z_j^*(\omega) &= z_j +   \omega  \cdot(z_i - z_j) \, \omega , \quad 
\text{where } \omega \in \mathbb{S}^2 \ .
\end{align*}
This is the so-called $\omega$-representation.  This representation is 
particularly useful, because the velocities are related to each other by a 
\emph{linear} transformation, and the strategy used to proof the results for  
the one-dimensional Kac system carries over rather directly. The direction 
$\omega$ will be chosen according to the uniform probability distribution on 
the unit sphere $\mathbb{S}^2$. 

Introduce the operators 
\begin{align*}
	(R_{ij} f)(\bm{z}) = \int_{\mathbb{S}^2} f(r_{ij}(\omega)^{-1}\bm{z}) \,\mathrm{d}\omega \ ,
\end{align*}
where $\mathrm{d}\omega$ denotes the uniform probability measure on the sphere 
and the matrices $r_{ij}(\omega)$ are symmetric involutions acting as
\begin{align*}
	\begin{pmatrix}
		z_i^*\\ z_j^*
	\end{pmatrix} 
	= \begin{pmatrix}
		I-\omega \omega^T & \omega \omega^T \\ \omega \omega^T & I-\omega \omega^T
	\end{pmatrix}
	\begin{pmatrix}
		z_i \\ z_j
	\end{pmatrix}
\end{align*} 
on the velocities of the particles $i$ and $j$, and as identities otherwise.  
They will replace the one-dimensional Kac collision operators in \eqref{eq:model} 
in the otherwise unchanged generator of the time evolution. Notice that the 
matrices $r_{ij}(\omega)$ are orthogonal, so that the expansion formula 
\eqref{series} still holds with the obvious changes in the dimension of the 
single-particle spaces. 

We prove an analog of Theorem \ref{thm: main} for the case of three-dimensional 
Boltzmann-Kac collisions and pseudo-Maxwellian molecules. 
\begin{theorem}\label{thm: boltzmann-main}
	Let $N\geq M$ and $F_0(\bm{v}, \bm{w}) = 
	f_0(\bm{v})\,\mathrm{e}^{-\pi|\bm{w}|^2}$ for some probability 
	distribution $f_0$ on $\R^{3M}$.  Then the entropy of the marginal 
	\begin{align*}
		f(\bm{v}, t):= \int_{\R^{3N}} \left(\mathrm{e}^{\mathcal{L}t}F_0\right)(\bm{v}, \bm{w})\,\mathrm{d}\bm{w}
	\end{align*}
	with respect to the thermal state $\mathrm{e}^{-\pi|\bm{v}|^2}$ is bounded by
	\begin{align*}
		S(f(\cdot,t)) \leq \left[\frac{N}{N+M} + \frac{N}{N+M} \mathrm{e}^{-\frac{\mu}{3} \frac{N+M}{N} t} \right]\, S(f_0)\ .
	\end{align*}
\end{theorem}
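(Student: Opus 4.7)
The plan is to follow the proof of Theorem \ref{thm: main} step by step, adjusting only those ingredients that depend on the dimension of the single-particle velocity space and on the specific form of the collision operator. Since each $r_{ij}(\omega)$ is orthogonal in $\R^{3(N+M)}$ and acts nontrivially only on the $(i,j)$-block, the series expansion \eqref{series} remains valid with $\rho(\theta)\,\mathrm{d}\theta$ replaced by the uniform probability measure on $\mathbb{S}^2$. Writing $\prod_{l=1}^k r_{\alpha_l}(\omega_l)^{-1}$ in block form with top-left block $A_k(\underline\alpha,\underline\omega)\in\R^{3M\times 3M}$, orthogonality still gives $A_kA_k^T+B_kB_k^T=I_{3M}$, so the Gaussian integration lemma and the singular value decomposition $A_k=U_k\Gamma_k V_k^T$ apply unchanged. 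This yields a representation formula of the form \eqref{eq: representation} with $3M$ singular values $\gamma_{k,1},\dots,\gamma_{k,3M}\in[0,1]$. Applying Nelson's hypercontractive estimate coordinate by coordinate exactly as in Theorem \ref{thm: entropypartial} then produces the partial entropy bound
\begin{align*}
\mathcal S(g_k(\cdot,\underline\alpha,\underline\omega))\le\sum_{\sigma\subset\{1,\dots,3M\}}\prod_{i\in\sigma^{\mathrm c}}\gamma_{k,i}^2\prod_{j\in\sigma}(1-\gamma_{k,j}^2)\int_{\R^{3M}}h_0(\mathbf v)\log h_{0,U_k}^{\sigma}\!\left(P_{\sigma^{\mathrm c}}U_k^T\mathbf v\right)e^{-\pi|\mathbf v|^2}\mathrm{d}\mathbf v.
\end{align*}

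The heart of the argument is the sum rule for the $3M\times 3M$ average $Z=\sum_{\underline\alpha}\lambda_{\underline\alpha}\int A_kA_k^T\,\mathrm{d}\underline\omega$, which together with the geometric Brascamp-Lieb inequality (Theorem \ref{thm: correlation}) will give the analog of Theorem \ref{thm: entropyestimate}. For a single pair $(i,j)$, set $P=\omega\omega^T$ and compute directly that $r_{ij}(\omega)^{-1}L(m_1,m_2)r_{ij}(\omega)$ has diagonal blocks $m_1(I-P)+m_2 P$ and $m_2(I-P)+m_1 P$, while its off-diagonal blocks vanish \emph{pointwise} because $P(I-P)=0$. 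Averaging over $\omega\in\mathbb{S}^2$ and using $\int_{\mathbb{S}^2}\omega\omega^T\,\mathrm{d}\omega=\tfrac13 I_3$ reduces the single-collision map to the elementary identity of Section \ref{sec: proofofsumrule} with effective parameter $\tilde\nu=1/3$, and hence $\mu_\nu=\mu/3$. The iteration proceeds verbatim: the $2\times2$ matrix $\mathcal P=I_2-\frac{\mu/3}{\Lambda N}\begin{pmatrix}N&-N\\-M&M\end{pmatrix}$ has eigenvalues $1$ and $1-\mu(M+N)/(3\Lambda N)$ with eigenvectors $(1,1)^T$ and $(N,-M)^T/(M+N)$, yielding
\begin{align*}
C_{k,M}=\frac{M}{M+N}+\frac{N}{M+N}\left(1-\frac{\mu}{3}\frac{M+N}{\Lambda N}\right)^{\!k}.
\end{align*}
Combining this with the partial bound gives $\mathcal S(h_k)\le C_{k,M}\mathcal S(h_0)$, and summing the series $e^{-\Lambda t}\sum_k\frac{(\Lambda t)^k}{k!}C_{k,M}$ exactly as in the proof of Theorem \ref{thm: main} produces the decay $S(f(\cdot,t))\le\bigl[\frac{M}{N+M}+\frac{N}{N+M}e^{-\frac{\mu}{3}\frac{N+M}{N}t}\bigr]S(f_0)$.

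The main technical obstacle I expect is the discretization of the uniform measure on $\mathbb{S}^2$ needed to apply Theorem \ref{thm: correlation}, whose hypotheses are stated for finite sums. Lemma \ref{lem: approximation} must be replaced by a sequence of discrete probability measures $\nu_K$ on $\mathbb{S}^2$ converging weakly to the uniform measure and satisfying $\int_{\mathbb{S}^2}\omega\omega^T\,\nu_K(\mathrm{d}\omega)=\tfrac13 I_3$ for every $K$, so that both the pointwise vanishing of off-diagonal blocks and the exact value $\mu_\nu=\mu/3$ are preserved at each discretization level. Finite spherical $2$-designs (or, more elementarily, the six vertices of a regular octahedron with equal weights) provide such families, and the limiting argument of Appendix \ref{sec: approximation} then transfers with no substantive change.
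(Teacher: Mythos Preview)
Your proposal is correct and follows the paper's approach almost line by line: the representation formula, the iterated Nelson estimate, the sum rule via the $2\times 2$ transfer matrix $\mathcal P$ with effective parameter $\mu/3$, and the geometric Brascamp--Lieb inequality are all exactly what the paper does in Section~\ref{sec: boltzmann}. Your observation that the off-diagonal blocks of $r_{ij}(\omega)^{-1}L(m_1,m_2)r_{ij}(\omega)$ vanish \emph{pointwise} (because $P(I-P)=0$ for the rank-one projection $P=\omega\omega^T$) is correct and slightly sharper than what the paper writes down; it explains why no analogue of the constraint~\eqref{assumprho} is needed in three dimensions.

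The one place where your sketch is looser than the paper is the discretization. You propose spherical $2$-designs, but a sequence of $2$-designs need not converge weakly to the uniform measure; the octahedron, for instance, is a fixed six-point measure. What you need is a sequence $\nu_K$ that both integrates $\omega\omega^T$ exactly to $\tfrac13 I_3$ \emph{and} converges weakly. The paper achieves this by an explicit product construction: Gauss--Legendre nodes in $\cos\theta$ (exact on polynomials of degree $\le 2L-1$, hence preserving the second-moment constraint for every $L\ge 2$) together with a uniform grid in $\varphi$. Your idea can be repaired by taking spherical $t$-designs with $t\to\infty$, which do converge weakly, but you should say so explicitly.
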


\begin{remark}
	The result in three dimensions is very similar to the case of 
	one-dimensional Kac collisions, with the difference that the rate of  
	exponential decay is $\mu/3$ instead of $\mu_{\rho}$. The factor $1/3$ 
	comes from the fact that $\int_{\mathbb{S}^2} \mathrm{d}\omega \, 
	\omega\omega^T = I_{3}/3$. 
	It would be interesting to cover the true Maxwellian molecules interaction 
	\begin{align*}
		(R_{ij} f)(z) =  \int_{\Sp^2} b\left( \frac{v_i-v_j}{|v_i-v_j|} \cdot \omega \right) f(r_{ij}(\omega)^{-1}z)\, \mathrm{d}\omega\ .
	\end{align*}
	However, the dependence of the scattering rate $b$ on the velocities 
	doesn't seem to be treatable with the above methods.  
\end{remark}

The proof of Theorem \ref{thm: boltzmann-main} essentially 
deviates from the one-dimensional case in only two places: the sum rule and 
the discrete approximation of the integrals. We begin by proving an 
analogue of Theorem \ref{thm: Sumrule}. Most of the steps for the computation of 
the matrix $Z$ in \eqref{average} are the same. What remains is to compute
\begin{align*}
	Z :=  \sum_{\alpha_1, \dots, \alpha_k} \lambda_{\alpha_1} \cdots \lambda_{\alpha_k} \int_{\mathbb{S}^2\times \cdots \times \mathbb{S}^2} \mathrm{d}\omega_1\cdots \mathrm{d}\omega_k \, 
	A_k(\underline{\alpha},\underline{\omega}) A_k(\underline{\alpha},\underline{\omega})^T \ , 
\end{align*}
which is somewhat different for the case of Boltzmann-Kac collisions. Recall 
that $A_k(\underline{\alpha},\underline{\omega})$ is the upper left $3M\times 
3M$ block of $\smash{[\prod_{j=1}^k r_{\alpha_j}(\omega_j)]^{-1}}$, i.e., 
 \begin{align*}
 	A_k(\underline{\alpha},\underline{\omega}) = P_{3M} [\Pi_{j=1}^k 
r_{\alpha_j}(\omega_j)]^{-1} P_{3M}^T \ 
 \end{align*}
with the projection $P_{3M} = \begin{pmatrix}
	I_{3M} & 0
\end{pmatrix}$ from $\R^{3M+3N}\to \R^{3M}$. In 
particular, by linearity, 
\begin{align*}
	Z = P_{3M} \left(\sum_{\alpha_1,\dots,\alpha_k} 
\lambda_{\alpha_1}\cdots\lambda_{\alpha_k} \int_{(\mathbb{S}^2)^k} 
\mathrm{d}\underline{\omega} \, 
	\left[\prod_{j=1}^k r_{\alpha_j}(\omega_j)\right]^{-1} \begin{pmatrix}
		I_{3M} & 0 \\ 0 & 0
	\end{pmatrix} \left[\prod_{j=1}^k r_{\alpha_j}(\omega_j)\right] \right) 
P _{3M}^T\ .
\end{align*}

As in the proof of Theorem \ref{thm: Sumrule} we have
\begin{lemma}\label{lem:BoltzmannAAT}
Let $\alpha, \beta \geq 0$. Then 
	\begin{align*}
		\sum_{1\leq i < j \leq M+N} \lambda_{ij} \int_{\mathbb{S}^2} \mathrm{d}\omega \, 
	r_{ij}(\omega)^{-1} \begin{pmatrix}
		\alpha I_{3M} & 0 \\ 0 & \beta I_{3N}
	\end{pmatrix} r_{ij}(\omega) = \begin{pmatrix}
		\alpha' I_{3M} & 0 \\ 0 & \beta' I_{3N}
	\end{pmatrix} \ ,
	\end{align*}
	where $\alpha', \beta'$ are related to $\alpha,\beta$ by
	\begin{align*}
		\begin{pmatrix}
			\alpha' \\ \beta'
		\end{pmatrix} = \mathcal P \begin{pmatrix}
			\alpha \\ \beta 
		\end{pmatrix}, \quad \mathcal P = I_2 - \frac{\mu}{3\Lambda} \begin{pmatrix}
			 1 & -1 \\
			- \frac{M}{N} & \frac{M}{N}
		\end{pmatrix}\ .
	\end{align*}
\end{lemma}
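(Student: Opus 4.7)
The plan is to exploit the fact that each $r_{ij}(\omega)$ is a symmetric involution acting as the identity outside the two-particle subspace indexed by $(i,j)$. Consequently $r_{ij}(\omega)^{-1} = r_{ij}(\omega)$, and the conjugation $r_{ij}(\omega)^{-1} L r_{ij}(\omega)$ modifies only the $6\times 6$ diagonal block of $L := \mathrm{diag}(\alpha I_{3M}, \beta I_{3N})$ corresponding to particles $i$ and $j$. First, for pairs $(i,j)$ of the same type (both in the system or both in the reservoir), the local $6\times 6$ block of $L$ is a scalar multiple of the identity, hence commutes with $r_{ij}(\omega)$, and the conjugation acts trivially on it. Thus such pairs contribute only the unchanged $L$ weighted by the total mass $\sum_{\text{same-type}} \lambda_{ij}$.

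For a cross pair $i\in\{1,\dots,M\}$, $j\in\{M+1,\dots,M+N\}$, I would set $P = \omega\omega^T$ and use the orthogonal decomposition $P^2 = P$, $(I-P)^2 = I-P$, $P(I-P)=0$ to compute directly
\begin{align*}
\begin{pmatrix} I-P & P \\ P & I-P \end{pmatrix}\begin{pmatrix}\alpha I_3 & 0\\ 0 & \beta I_3\end{pmatrix}\begin{pmatrix} I-P & P \\ P & I-P \end{pmatrix} = \begin{pmatrix} \alpha(I-P)+\beta P & 0\\ 0 & \alpha P + \beta(I-P)\end{pmatrix}.
\end{align*}
The off-diagonal blocks conveniently vanish. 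Integrating over $\mathbb{S}^2$ with the uniform probability measure and using the spherical identity $\int_{\mathbb{S}^2} \omega\omega^T\,\mathrm{d}\omega = \tfrac{1}{3}I_3$ replaces $P$ by $\tfrac{1}{3}I_3$, producing the $3\times 3$ block $(\tfrac{2}{3}\alpha + \tfrac{1}{3}\beta) I_3$ in the $i$-slot and $(\tfrac{1}{3}\alpha + \tfrac{2}{3}\beta) I_3$ in the $j$-slot, with all other diagonal blocks of $L$ untouched.

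To assemble the final identity, I would sum the cross-pair contributions. A fixed system particle $k$ sits in $N$ cross pairs each weighted by $\mu/(\Lambda N)$, so its $3\times 3$ block changes by $N\cdot\tfrac{\mu}{\Lambda N}\cdot\tfrac{1}{3}(\beta-\alpha)\,I_3 = \tfrac{\mu}{3\Lambda}(\beta-\alpha)I_3$, giving $\alpha' = \alpha - \tfrac{\mu}{3\Lambda}(\alpha-\beta)$. A fixed reservoir particle lies in $M$ cross pairs, contributing $\tfrac{\mu M}{3\Lambda N}(\alpha-\beta)I_3$ to its block, giving $\beta' = \beta + \tfrac{\mu}{3\Lambda}\cdot\tfrac{M}{N}(\alpha-\beta)$. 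Reading these two linear relations off as the action of a $2\times 2$ matrix on $(\alpha,\beta)^T$ yields precisely $\mathcal{P} = I_2 - \tfrac{\mu}{3\Lambda}\bigl(\begin{smallmatrix} 1 & -1 \\ -M/N & M/N\end{smallmatrix}\bigr)$, as claimed.

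There is no substantive obstacle; the only non-trivial input is the spherical integral $\int_{\mathbb{S}^2}\omega\omega^T\,\mathrm{d}\omega = \tfrac{1}{3}I_3$, which plays the role analogous to the one-dimensional identity $\int \nu(\mathrm{d}\theta)\sin^2\theta = \tilde{\nu}$ in Theorem~\ref{thm: Sumrule}. The main care required is bookkeeping of the coefficients $\lambda_{ij}$ in the three regimes (system--system, reservoir--reservoir, cross) so that the contributions from the two same-type sums combine with the identity part to give $L$ itself and the cross part produces the off-diagonal entries of $I_2 - \mathcal{P}$.
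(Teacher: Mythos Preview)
Your proposal is correct and follows essentially the same approach as the paper: split into same-type and cross pairs, note that same-type conjugations are trivial, compute the cross $6\times 6$ block explicitly using $P=\omega\omega^T$ and $P^2=P$, integrate using $\int_{\mathbb{S}^2}\omega\omega^T\,\mathrm{d}\omega=\tfrac{1}{3}I_3$, and then do the bookkeeping over pairs. The only cosmetic difference is that you organize the sum as $L$ plus corrections (exploiting $\sum_\alpha\lambda_\alpha=1$), whereas the paper tracks the three regimes separately and recombines via $\Lambda=\tfrac{M\lambda_S}{2}+\tfrac{N\lambda_R}{2}+\mu M$; the computations are otherwise identical.
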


Notice that the matrix $\mathcal P$ of Lemma \ref{lem:BoltzmannAAT} has eigenvalues $1$ and $1- \mu/(3\Lambda) \left(1 + M/N\right)$ with corresponding eigenvectors \smash{$\begin{pmatrix} 1& 1\end{pmatrix}^T$} and \smash{$\begin{pmatrix}-N/M & 1 \end{pmatrix}^T$}. Repeated application of Lemma \ref{lem:BoltzmannAAT} then implies, see also the argument in the one-dimensional case, 
\begin{align*}
	& \sum_{\alpha_1, \dots, \alpha_k} \hskip-1ex\lambda_{\alpha_1} \cdots \lambda_{\alpha_k}   \int_{(\mathbb{S}^2)^k} \hskip-1ex\mathrm{d}\underline{\omega} \, 
	\left[\prod_{j=1}^k r_{\alpha_j}(\omega_j)\right]^{-1} \begin{pmatrix}
		\alpha I_{3M} & 0 \\ 0 & \beta I_{3N}
	\end{pmatrix}  \left[\prod_{j=1}^k r_{\alpha_j}(\omega_j)\right]
	= \begin{pmatrix}
		\alpha^{(k)} I_{3M} & 0 \\ 0 & \beta^{(k)} I_{3N}
	\end{pmatrix},
\end{align*}
where
\begin{align*}
	\begin{pmatrix}
		\alpha^{(k)} \\ \beta^{(k)}
	\end{pmatrix} 
	= \mathcal P^k \begin{pmatrix}
		\alpha \\ \beta
	\end{pmatrix}\ .
\end{align*}
Before we prove Lemma \ref{lem:BoltzmannAAT}, let us make an easy observation. 
\begin{corollary}
In the particular case $\alpha=1$, $\beta=0$, we get
\begin{align*}
	Z = \left[ \frac{M}{M+N} + \frac{N}{M+N} \left( 1- \frac{\mu}{3\Lambda} \left(1+ \frac{M}{N}\right)\right)^k \right] I_{3M} \ .
\end{align*}
\end{corollary}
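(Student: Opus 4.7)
The plan is to simply iterate Lemma \ref{lem:BoltzmannAAT} and then diagonalize the $2 \times 2$ matrix $\mathcal P$ explicitly. Starting from the representation of $Z$ just before the corollary, and using the displayed iterated identity with $\alpha = 1$, $\beta = 0$, the integrand under the composition of rotations becomes block-diagonal with top block $\alpha^{(k)} I_{3M}$ and bottom block $\beta^{(k)} I_{3N}$, where $(\alpha^{(k)}, \beta^{(k)})^T = \mathcal P^k (1, 0)^T$. Since $Z = P_{3M}(\cdots) P_{3M}^T$ only extracts the top-left $3M \times 3M$ block, we immediately obtain
\begin{align*}
Z = \alpha^{(k)}\, I_{3M}.
\end{align*}

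Next, I would compute $\alpha^{(k)}$ by diagonalizing $\mathcal P$. As noted in the statement of the lemma, $\mathcal P$ has eigenvalues $\ell_1 = 1$ and $\ell_2 = 1 - \tfrac{\mu}{3\Lambda}\bigl(1 + M/N\bigr)$, with eigenvectors $\underline m_1 = (1,1)^T$ and $\underline m_2 = (-N/M, 1)^T$ respectively. Expanding the initial vector in this eigenbasis,
\begin{align*}
\begin{pmatrix} 1 \\ 0 \end{pmatrix} = \frac{M}{M+N}\, \underline m_1 \;-\; \frac{M}{M+N}\, \underline m_2,
\end{align*}
so that
\begin{align*}
\mathcal P^k \begin{pmatrix} 1 \\ 0 \end{pmatrix} = \frac{M}{M+N}\, \underline m_1 \;-\; \frac{M}{M+N}\, \ell_2^{\,k}\, \underline m_2,
\end{align*}
and reading off the first component gives
\begin{align*}
\alpha^{(k)} = \frac{M}{M+N} + \frac{N}{M+N}\left(1 - \frac{\mu}{3\Lambda}\Bigl(1 + \frac{M}{N}\Bigr)\right)^{k}.
\end{align*}
Substituting into $Z = \alpha^{(k)} I_{3M}$ yields the claim.

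This is a direct consequence of Lemma \ref{lem:BoltzmannAAT}, so there is no real obstacle beyond a correct eigen-decomposition and coefficient check; the only thing to verify carefully is the expansion of $(1, 0)^T$ in the (non-orthogonal) eigenbasis of $\mathcal P$, which I would double-check by reconstructing $\mathcal P^1 (1,0)^T$ from the formula and comparing with one step of Lemma \ref{lem:BoltzmannAAT} applied to $(\alpha,\beta) = (1,0)$.
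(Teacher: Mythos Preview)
Your proposal is correct and follows essentially the same approach as the paper: iterate Lemma~\ref{lem:BoltzmannAAT}, observe that $Z$ is the top-left block and hence equals $\alpha^{(k)} I_{3M}$, then diagonalize $\mathcal P$ using the stated eigenvalues and eigenvectors to compute the first component of $\mathcal P^k(1,0)^T$. The paper treats this corollary as an ``easy observation'' and does not spell out the eigenbasis expansion in the Boltzmann case, but it carries out exactly this computation in the one-dimensional analogue (Section~\ref{sec: proofofsumrule}), so your argument is the intended one.
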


\begin{proof}[Proof of Lemma \ref{lem:BoltzmannAAT}]
	For $1\leq i < j \leq M$ (respectively for $M+1 \leq i < j \leq M+N$) the operators $r_{ij}(\omega)$ only act non-trivially in the first $3M$ (last $3N$) variables. Taking into account that $r_{ij}(\omega)^{-1} \, I  \, r_{ij}(\omega) = I$, we obtain
	\begin{align*}
		\frac{\lambda_S}{M-1} \sum_{1\leq i < j \leq M} \int_{\mathbb{S}^2}\mathrm{d}\omega \, r_{ij}(\omega)^{-1} \begin{pmatrix}
		\alpha I_{3M} & 0 \\ 0 & \beta I_{3N}
	\end{pmatrix} r_{ij}(\omega) &= \frac{M \lambda_S}{2} \begin{pmatrix}
		\alpha I_{3M} & 0 \\ 0 & \beta I_{3N}
	\end{pmatrix}\ ,
	\intertext{and} 
		\frac{\lambda_R}{N-1} \sum_{M+1\leq i < j \leq M+N} \int_{\mathbb{S}^2}\mathrm{d}\omega \, r_{ij}(\omega)^{-1} \begin{pmatrix}
		\alpha I_{3M} & 0 \\ 0 & \beta I_{3N}
	\end{pmatrix} r_{ij}(\omega) &= \frac{N \lambda_R}{2} \begin{pmatrix}
		\alpha I_{3M} & 0 \\ 0 & \beta I_{3N}
	\end{pmatrix}\ .
	\end{align*}
It remains to look at the interaction terms $i=1, \dots, M$ and $j=M+1, \dots, M+N$. Notice that 
\begin{align*}
	&r_{ij}(\omega)^{-1} \begin{pmatrix}
		\alpha I_{3M} & 0 \\ 0 & \beta I_{3N}
	\end{pmatrix} r_{ij}(\omega) \\
	& = \begin{pmatrix}
		\alpha I_{3M} & 0 \\ 0 & \beta I_{3N}
	\end{pmatrix}
	+ \left( \begin{array}{c|c}
		\begin{matrix}
			0 &  & \\
			  &  (\beta - \alpha) \omega \omega^T &  \\
		  & & 0
		\end{matrix}
		   & 0 \\ \hline 
		   0 & 		\begin{matrix}
			0 &  & \\
			  &  (\beta - \alpha) \omega \omega^T &  \\
		  & & 0
		\end{matrix}
	\end{array}\right) \ ,
\end{align*}
	where the non-zero entries in the second summand on the right-hand side 
correspond to the $i^{\text{th}}$, respectively $j^{\text{th}}$, $3\times 3$ 
block on the diagonal.
Since $\int_{\mathbb{S}^2} \mathrm{d}\omega \, \omega\omega^T = 1/3 \,I_3$, we obtain  
	\begin{align*}
		& \frac{\mu}{N} \sum_{i=1}^M \sum_{j=M+1}^{M+N} \int_{\mathbb{S}^2}\mathrm{d}\omega \, r_{ij}(\omega)^{-1} \begin{pmatrix}
		\alpha I_{3M} & 0 \\ 0 & \beta I_{3N}
	\end{pmatrix} r_{ij}(\omega) \\
	& = \mu M \begin{pmatrix}
		\alpha I_{3M} & 0 \\ 0 & \beta I_{3N}
	\end{pmatrix} 
	+ \frac{\mu}{3} (\alpha-\beta) 
	\begin{pmatrix}
		 -I_{3M} & 0 \\ 0 & \frac{M}{N} I_{3N}
	\end{pmatrix}\ .
	\end{align*}
Recall the definition of  $\Lambda = M\lambda_S/2 + N \lambda_R/2 + \mu M$. 
Hence  summation of all the three contributions yields the statement of the 
Lemma. 
\end{proof}

As in the one-dimensional case, in order to apply the geometric Brascamp-Lieb 
inequality Theorem \ref{thm: correlation}, we need to approximate the uniform probability measure 
$\mathrm{d}\omega$ on the sphere by a suitable sequence of discrete measures
as in the one-dimensional case (see Lemma \ref{lem: approximation}). 
Additionally, in each step of the discretization, the constraint 
$\int_{\mathbb{S}^2} \mathrm{d}\omega \, \omega\omega^T = 1/3 I$, has to hold. 
This is important, because it guarantees that the geometric Brascamp-Lieb 
condition, i.e., the sum rule \eqref{eq: sumrule}, holds in each step.

In order to find such an approximation, we parametrize the sphere in the usual 
way by spherical coordinates  
\begin{align*}
	\omega = \omega(\theta, \varphi) = \begin{pmatrix}
		\sin\theta \cos\varphi \\ \sin\theta \sin\varphi \\ \cos\theta 
	\end{pmatrix}
\end{align*}
for $\theta\in [0, \pi]$ and $\varphi \in [0, 2\pi]$. For $K,L\in\N$ we 
introduce the measures  
\begin{align*}
	\Phi_K &:= \frac{\pi}{K} \sum_{j=0}^{2K-1} \delta_{\frac{\pi}{K}j} \quad \text{on } [0, 2\pi], \quad \text{and}\\
	\Theta_L &:= \sum_{i=1}^L \frac{2}{(1-u_i^2)^{3/2} (P_L'(u_i))^2} \delta_{\arccos u_i} \quad \text{on } [0,\pi],
\end{align*}
where $P_L$ is the Legendre polynomial of order $L$ on $[-1,1]$, and $u_i$, $i=1, \dots, L$, are its zeros. 
Then, if $f \in \mathcal{C}[0, 2\pi]$ and $g\in\mathcal{C}[-1,1]$,
\begin{align*}
	\int_{0}^{2\pi} f(\varphi)\,\Phi_k(\mathrm{d}\varphi) = \frac{\pi}{K} \sum_{j=0}^{2K-1} f\left(\frac{\pi}{K}j\right) \to \int_{0}^{2\pi} f(\varphi)\,\mathrm{d}\varphi 
\end{align*}
as $K\to\infty$ as Riemann sum. Furthermore, 
\begin{align*}
	\int_{0}^\pi g(\cos\theta)\,\sin\theta\,\Theta_L(\mathrm{d}\theta) = \sum_{i=1}^L \frac{2}{(1-u_i^2) (P_L'(u_i))^2} g(u_i)
	\to \int_{-1}^1 g(u)\,\mathrm{d}u = \int_0^{\pi} g(\cos\theta)\,\sin\theta\,\mathrm{d}\theta
\end{align*}
as $L\to\infty$ by Gauss-Legendre quadrature . The latter approximation is exact 
for polynomials of order less or equal to $2L-1$. In particular, we have 
\begin{align*}
	\int_{0}^{\pi} \cos^2\theta \,\sin\theta\,\Theta_L(\mathrm{d}\theta) = \int_{0}^{\pi} \cos^2\theta \,\sin\theta\,\mathrm{d}\theta = \frac{2}{3} \ , \  \text{and }
	\int_{0}^{\pi} \sin^3\theta\,\Theta_L(\mathrm{d}\theta) = \int_{0}^{\pi} \sin^3\theta\,\mathrm{d}\theta = \frac{4}{3},
\end{align*}
for all $L\geq 2$. 
It is easy to check that 
\begin{align*}
	\int_{0}^{2\pi} \sin\varphi \cos\varphi \,\Phi_k(\mathrm{d}\varphi)
	&= 0, \\
	\int_{0}^{2\pi} \sin\varphi \,\Phi_k(\mathrm{d}\varphi) = 
	\int_{0}^{2\pi} \cos\varphi \Phi_k(\mathrm{d}\varphi) &= 0,\\
\int_{0}^{2\pi} \sin^2\varphi \Phi_k(\mathrm{d}\varphi) = 
	\int_{0}^{2\pi} \cos^2\varphi \,\Phi_k(\mathrm{d}\varphi) &= \pi,
\end{align*}
for all $K\geq 2$. Consequently,
\begin{align*}
	&\frac{1}{4\pi} \int_{0}^{2\pi} \omega(\theta, \varphi) \omega(\theta, \varphi)^T \,\Theta_L(\mathrm{d}\theta) \Phi_k(\mathrm{d}\varphi) \\
	&= \frac{1}{2K} \sum_{j=0}^{2K-1} \sum_{i=0}^L \frac{\omega\left(\arccos u_i, \pi j/K\right)\omega\left(\arccos u_i, \pi j/K\right)^T}{(1-u_i^2) (P_L'(u_i))^2}  
	= \frac{1}{3} I_3
\end{align*}
for all $K,L\geq 2$. It follows that  $Z$ is not changed by replacing the uniform measure on $\mathbb{S}^2$ by the above discrete approximation, in particular, $Z$ is still proportional to the identity matrix, which guarantees the applicability of the geometric Brascamp-Lieb inequality.

This concludes the proof of Theorem \ref{thm: boltzmann-main}. \hfill $\square$ 
%%%%%%%%%%%%%%%%%%%%%%%%%%%%%%%%%%%%%%%%%%%%%%%%%%
%%%%%%%%%%%%%%%%%%%%%%%%%%%%%%%%%%%%%%%%%%%%%%%%%%%

\appendix

\vskip .3 true in
\section{\bf The Geometric Brascamp-Lieb inequality and the entropy inequality}  \label{sec: bl}

In this section we prove Theorem \ref{thm: correlation}. We use the same strategy as in \cite{CarlenLiebLoss} and
\cite{BennettCarberyChristTao} which consists of transporting the functions $f_i$ with the heat kernel in such a way
that the right-hand side of \eqref{eq: brascamplieb} remains fixed while the 
left-hand side of that inequality increases. The results in
\cite{BennettCarberyChristTao} are quite general but for the special case in which the sum rule \eqref{eq: sumrule} 
holds, the proof is quite simple and this is one of the reasons why we include it here.

\begin{proof} [Proof of Theorem \ref{thm: correlation}]  The inequality 
\eqref{eq: brascamplieb} is equivalent to
\begin{equation} \label{brascamplieboriginal}
\int_{\R^M} \prod_{i=1}^K f_i^{c_i} (B_i \mathbf v) \, \mathrm{d} \mathbf v \le 
\prod_{i=1}^K \left( \int_{H_i} f_i(u) \, \mathrm{d}  u 
\right)^{c_i}  \ . 
\end{equation}
This follows from the identity 
\begin{equation*}
 \prod_{i=1}^K \left(e^{-\pi |B_i \mathbf v|^2}\right)^{c_i}=e^{-\pi 
\sum_{i=1}^K (\mathbf v\  c_iB_i^TB_i\ \mathbf v)}  = e^{-\pi |\mathbf v|^2}\ .
\end{equation*}
We transport the functions $f_i$ by the heat flow, that is we define
\begin{equation} \label{heatflow}
f_i(B_i \mathbf v, t) := \frac{1}{(4 \pi t)^{M/2}} \int_{\R^M} e^{- |\mathbf v - 
\mathbf w|^2/(4t)} f_i(B_i \mathbf w) \mathrm{d} \mathbf w \ .
\end{equation}
For the above definition to make sense, we have to show that the right-hand 
side is a function of $B_i \mathbf v$ alone. The condition $B_iB_i^T= I_{H_i}$ 
means that the matrix $P_i=B_i^TB_i$ is an orthogonal projection onto a 
$d_i$ dimensional subspace of $\R^M$. Moreover, $B_iP_i = I_{H_i} B_i = B_i$. 
We rewrite the integral \eqref{heatflow} by splitting it in an integral over 
$\mathbf w'\in {\rm Ran}\,P_i$ and one over integration over $\mathbf 
w''\in {\rm Ran}\,P_i^\perp$. Carrying out the integration over $\mathbf 
w''$ we obtain 
\begin{align*}
f_i(B_i \mathbf v, t) & = \frac{1}{(4 \pi t)^{M/2}} \int_{{\rm Ran }\,P_i} 
\int_{{\rm Ran}\,P_i^\perp} e^{- |(P_i \mathbf v - P_i\mathbf w') |^2/(4t)} 
e^{- 
|(P_i^\perp \mathbf v - \mathbf w'') |^2/(4t)}f_i(B_i P_i \mathbf w) \, 
\mathrm{d} \mathbf w' \mathrm{d}\mathbf w'' \\
& = \frac{1}{(4 \pi t)^{d_i/2}} \int_{{\rm Ran }\,P_i} e^{- |(P_i \mathbf v - 
P_i 
\mathbf w') |^2/(4t)} f_i(B_i P_i \mathbf w') \, \mathrm{d} \mathbf w' \\ 
& = \frac{1}{(4 \pi t)^{d_i/2}} \int_{{\rm Ran }\,P_i} e^{- |(B_i \mathbf v - 
B_i 
\mathbf w') |^2/(4t)} f_i(B_i \mathbf w')\, \mathrm{d} \mathbf w' \\
% \end{align*}
% Further, because $B_i:{\rm Ran} P_i \rightarrow H_i$ is an isometry we may rewrite this integral as
% \begin{align*}
& = \frac{1}{(4 \pi t)^{d_i/2}} \int_{H_i} e^{- |(B_i \mathbf v - u) 
|^2/(4t)} f_i( u) \, \mathrm{d} u \ .
\end{align*}
where, in the last equality, we have used that $B_i$ maps the range of $P_i$
isometrically onto $H_i$. This justifies \eqref{heatflow}. Moreover, the above 
computation also shows that 
\begin{equation*}
 \int _{H_i} f_i( u, t)\mathrm{d} u=\int _{H_i} f_i(
u)\mathrm{d}  u
\end{equation*}
so that the right-hand side of  the inequality \eqref{brascamplieboriginal} 
does not change under the heat flow.  

We now show that the left-hand side of \eqref{brascamplieboriginal} is 
an increasing function of $t$. It is convenient to set $\phi_i(u, 
t)=\log f_i(u, t)$. Differentiating the function $\phi_i(B_i \mathbf v, 
t)$ with respect to $t$ yields
\begin{align*}
\frac{\mathrm{d}}{\mathrm{d}t} \phi_i(B_i \mathbf v, t) = \Delta_{\mathbf v} \phi_i(B_i \mathbf v, t) + |\nabla_{\mathbf v} \phi_i(B_i \mathbf v, t)|^2 \ .
\end{align*}
Moreover,
\begin{align*}
\frac{\mathrm{d}}{\mathrm{d}t} \int_{\R^M} \prod_{i=1}^K f_i^{c_i} (B_i \mathbf v,t) \, \mathrm{d} \mathbf v  
= \sum_{m=1}^K c_m \int_{\R^M} [\Delta_{\mathbf v} \phi_m(B_m \mathbf v, t) + |\nabla_{\mathbf v} \phi_m(B_m\mathbf v, t)|^2] \prod_{i=1}^K f_i^{c_i} (B_i \mathbf v,t) \, \mathrm{d} \mathbf v  \ . 
\end{align*}
Integrating by parts the term containing the Laplacian yields
\begin{align*} \label{derivative}
& \frac{\mathrm{d}}{\mathrm{d}t} \int_{\R^M} \prod_{i=1}^K f_i^{c_i} (B_i 
\mathbf v,t) \, \mathrm{d} \mathbf v = \\
&\qquad \sum_{m=1}^K c_m  \int_{\R^M} |\nabla_{\mathbf v} \phi_m(B_m\mathbf v, 
t)|^2  \prod_{i=1}^K f_i^{c_i} (B_i \mathbf v,t) \mathrm{d} \mathbf v   \\
&\qquad\qquad -\sum_{m,\ell=1}^K c_m c_\ell   \int_{\R^M} \nabla_{\mathbf v} 
\phi_m(B_m \mathbf v, t) \cdot \nabla_{\mathbf v} \phi_\ell(B_\ell \mathbf v, t) 
 \prod_{i=1}^K f_i^{c_i} (B_i \mathbf v,t)  \mathrm{d} \mathbf v   \ .
\end{align*}
Finally, using that
\begin{align*}
\nabla_{\mathbf v} \phi_m(B_m \mathbf v, t) = B_i^T (\nabla \phi_m)(B_m \mathbf v)
\end{align*}
we get
\begin{align*}
& \frac{\mathrm{d}}{\mathrm{d}t} \int_{\R^M} \prod_{i=1}^K f_i^{c_i} (B_i 
\mathbf v,t) \, \mathrm{d} \mathbf v = \\
&\qquad \sum_{m=1}^K c_m  \int_{\R^M} |B_m^T( \nabla \phi_m) (B_m\mathbf v, 
t)|^2  \prod_{i=1}^K f_i^{c_i} 
(B_i \mathbf v,t) \mathrm{d} \mathbf v \\
&\qquad\qquad -\sum_{m,\ell=1}^K c_m c_\ell   \int_{\R^M} B_m^T (\nabla  \phi_m) 
(B_m \mathbf v, t) \cdot B_\ell^T (\nabla \phi_\ell)(B_\ell \mathbf v, t) 
\prod_{i=1}^K f_i^{c_i} (B_i \mathbf v,t) \mathrm{d} \mathbf v   \ .
\end{align*}
We claim that this expression is non-negative. The vectors $\nabla \phi_m \in 
H_m$ are arbitrary and hence the problem is reduced to proving that for 
any set of vectors $V_m \in H_m$, $m=1,\ldots, K$, it holds
\begin{align*}
\sum_{m=1}^K c_m  |B_m^T V_m|^2 - \sum_{m,\ell=1}^K c_m c_\ell  B_m^T V_m \cdot B_\ell^T V_\ell \, \ge \, 0 \ .
\end{align*}
Recalling that $B_mB_m^T = I_{H_m}$ and setting $Y = \sum_\ell c_\ell B_\ell^T 
V_\ell$ we conclude that it is enough to show that 
\begin{align*}
|Y|^2 \le \sum_{m=1}^K c_m  | V_m|^2 \ .
\end{align*}
This follows easily, since, by applying Schwarz's inequality, we find that
\begin{align*}
|Y|^2 = \sum_{\ell=1}^K c_\ell Y \cdot B_\ell^T V_\ell = \sum_{\ell=1}^K c_\ell 
B_\ell Y \cdot V_\ell \le \left(\sum_{\ell=1}^K c_\ell |B_\ell Y |^2  
\right)^{1/2}
 \left(\sum_{\ell=1}^K c_\ell  |V_\ell|^2 \right)^{1/2} \ .
\end{align*}
Combining this with \eqref{eq: sumrule}, we learn that
\begin{align*}
|Y|^2 \leq \left( Y\cdot  \sum_{\ell=1}^K c_\ell B_\ell^T B_\ell Y   \right)^{1/2}
\left(\sum_{\ell=1}^K c_\ell  |V_\ell|^2 \right)^{1/2} = |Y| \left(\sum_{\ell=1}^K c_\ell  |V_\ell|^2 \right)^{1/2} \ .
\end{align*}
Thus we have that, when applying \eqref{brascamplieboriginal} to the functions 
$f_i(u,t)$, the left hand side is an increasing function of $t$ while 
the right hand side does not depends on $t$.
% Thus, the right-hand side of the Brascamp-Lieb inequality stays constant 
% while the left-hand side is increasing.
It is thus enough to show that the inequality holds for large $t$. Using once 
more the sum-rule \eqref{eq: sumrule}, we see that 
% $$
% \int_{\R^M} \redremark{$\prod_{i=1}^K$}\frac{1}{(4 \pi t)^{c_i d_i/2}}  
% \left[\int_{H_i} e^{- \frac{|B_i \mathbf v - u|^2}{4t}}  f_i( u) d u 
% \right]^{c_i}  d \mathbf v
% $$ 
% $$
% \approx  \frac{1}{(4 \pi t)^{M}}  \int_{\R^M} e^{- \frac{|\mathbf v|^2}{4t}}
%  \redremark{$\prod_{i=1}^K$} \left[\int_{H_i}   f_i( u) d u \right]^{c_i}  d \mathbf v =  
%  \redremark{$\prod_{i=1}^K$} \left[\int_{H_i}   f_i( u) d u \right]^{c_i} 
%  $$

\begin{align*}
	\int_{\R^M} \prod_{i=1}^K \frac{1}{(4 \pi t)^{c_i d_i/2}} 
	\left[\int_{H_i} e^{- \frac{|B_i \mathbf v - u|^2}{4t}}  f_i( u) d u 
	\right]^{c_i}  d \mathbf v  
	&=\\ 
	\frac{1}{(4 \pi)^{M/2}} \int_{\R^M} \prod_{i=1}^K \left[\int_{H_i} 
	e^{- \frac{|B_i \mathbf v - t^{-1/2} u|^2}{4}}  f_i( u) d u 
	\right]^{c_i}  d \mathbf v 
	&\stackrel{t\to\infty}{\longrightarrow} \frac{1}{(4 \pi)^{M/2}} \int_{\R^M} e^{- 
	\frac{|\mathbf v|^2}{4}} \prod_{i=1}^K\left[\int_{H_i}   f_i( u) d u 
	\right]^{c_i}  d \mathbf v \\
	&\qquad = \prod_{i=1}^K \left[\int_{H_i}   f_i( u) d u \right]^{c_i} 
\end{align*}
which proves the first part of  Theorem \ref{thm: correlation}. 

To prove the entropy inequality~\eqref{eq: entropymarginalstwo} we 
follow~\cite{CarlenCordero}. Let  $h$ be a non-negative function whose $L^1$ 
norm is one and whose entropy is finite. An elementary computation then shows 
that
\begin{equation*} \label{dualentropy}
\int_{\R^M} h(\mathbf v) \log h(\mathbf v) \, e^{-\pi |\mathbf v|^2} \mathrm{d} \mathbf v = \sup_{\Phi} \left\{ \int_{\R^M} h(\mathbf v) \Phi(\mathbf v) \, e^{-\pi |\mathbf v|^2} \, \mathrm{d} \mathbf v - \log \int_{\R^M} e^{\Phi(\mathbf v)} e^{-\pi |\mathbf v|^2} \, \mathrm{d} \mathbf v\right\} \ .
\end{equation*}
Now, we set
\begin{align*}
\Phi(\mathbf v) = \sum_{i=1}^K c_i \log f_i(B_i \mathbf v) \ .
\end{align*}
This leads to the lower bound
\begin{align*}
& \int_{\R^M} h(\mathbf v) \log h(\mathbf v) \, e^{-\pi |\mathbf v|^2} \, 
\mathrm{d} \mathbf v \\ & \ge \sum_{i=1}^K c_i   \int_{\R^M} h(\mathbf v) 
 \log f_i(B_i \mathbf v) \, e^{-\pi |\mathbf v|^2} \, \mathrm{d} \mathbf v - 
\log \int_{\R^M} \prod_{i=1}^K f_i(B_i \mathbf v)^{c_i} \, e^{-\pi |\mathbf 
v|^2} \, \mathrm{d} \mathbf v \\ 
& \ge \sum_{i=1}^K c_i   \int_{\R^M} h(\mathbf v)
 \log f_i(B_i \mathbf v) \, e^{-\pi |\mathbf v|^2} \, \mathrm{d} \mathbf v - 
\log \left[\prod_{i=1}^K \left(\int_{H_i} f_i(u) \, e^{-\pi 
|u|^2} \, \mathrm{d} u\right)^{c_i} \right], 
\end{align*}
where the second step is a consequence of the Brascamp-Lieb inequality \eqref{eq: brascamplieb}.
\end{proof}

\vskip .3 true in
\section{\bf Proof of Lemma \ref{lem: approximation}}  \label{sec: approximation}

\begin{proof}
For $K$ any positive integer we convolve $\rho(\theta)$ with the non-negative trigonometric polynomial
\begin{align*}
p_K(\theta) := \frac{1}{2K+1} \left( \sum_{k= - K}^K e^{i k\theta}\right)^2 = \sum_{m=-2K}^{2K} \left(1- \frac{|m|}{2K+1}\right)e^{im \theta} \ ,
\end{align*}
and obtain a probability density $\rho_K(\theta)$. The Fourier coefficients of $\rho_K(\theta)$ are given by
\begin{align*}
\widehat \rho_K(m) = \widehat \rho(m) \left(1- \frac{|m|}{2K+1}\right)
\end{align*}
for $|m| \le 2K$ and are zero otherwise. In particular, 
\begin{align*}
\widehat \rho_K(2) - \widehat \rho_K(-2)  =4i \int_{-\pi}^\pi \rho_K(\theta) \sin \theta \cos \theta \, \mathrm{d} \theta = 0 \ .
\end{align*}
With $\rho_K$ we construct the measure
\begin{align*}
\nu_K( d \theta) = \frac{2\pi}{4K+1} \sum_{\ell = -2K}^{2K} \rho_K\left(\frac{2 \pi \ell}{4K+1}\right) \delta\left(\theta - \frac{2\pi \ell}{4K+1}\right) \, \mathrm{d} \theta \ .
\end{align*}
The measure $\nu_K$ is positive since $\rho_K((2 \pi \ell)/(4K+1)) \ge 0$.
Moreover, for all $m \in \Z$ with $|m| \le 2K$ the Fourier coefficients $\widehat \nu_K(m)$ and 
$\widehat \rho_K(m)$ coincide. In particular, we have
\begin{align*}
\int_{-\pi}^\pi \nu_K(\mathrm{d} \theta) \sin \theta \cos \theta =0 \ .
\end{align*}
To see this, we compute
\begin{align*}
\widehat \nu_K(m) = \frac{1}{2\pi} \int_{-\pi}^\pi \nu_K(\theta) e^{-i m \theta} \, \mathrm{d}\theta =\frac{1}{4K+1}  \sum_{\ell=-2K}^{2K } \rho_K\left(\frac{2 \pi \ell}{4K+1}\right) e^{-2\pi i m \ell/(4K+1)} \ 
\end{align*}
for $|m|\le 2K$. Observe that
\begin{align*}
\rho_K\left(\frac{2 \pi \ell}{4K+1}\right) = \sum_{k=-2K}^{2K}  \widehat \rho_K (k)e^{2 \pi i k \ell/(4K+1)} \ ,
\end{align*}
and, as a consequence,
\begin{align*}
\widehat \nu_K(m) =  \frac{1}{4K+1}  \sum_{\ell=-2K}^{2K}   \sum_{k=-2K}^{2K}  \widehat \rho_K(k)e^{2 \pi i \ell (k-m)/(4K+1)} \ . 
\end{align*}
But
\begin{align*}
\sum_{\ell=-2K}^{2K} e^{2 \pi i \ell (k-m)/(4K+1)} = \begin{cases} 4K+1 &{\rm if} \ k=m \\ 0 & {\rm if} \ k \not= m, \end{cases}
\end{align*}
and hence we conclude that 
\begin{equation} \label{eq: equalfourier}
\widehat \nu_K(m)  = \widehat \rho_K(m)
\end{equation}
for $|m| \le 2K$.
It is easy to see that for any continuous function $f$ on $[-\pi,\pi]$, 
\begin{align*}
\lim_{K \to \infty} \int_{-\pi}^\pi f(\theta) \nu_K(\mathrm{d}\theta) = \int_{-\pi}^\pi f(\theta) \rho(\theta) \, \mathrm{d} \theta \ .
\end{align*}

\end{proof}

%\bibliographystyle{plain} 
%\bibliography{nonequi}
\providecommand{\mr}[1]{\href{http://www.ams.org/mathscinet-getitem?mr=#1}{MR~#1}}
\providecommand{\zbl}[1]{\href{https://zbmath.org/?q=an:#1}{Zbl~#1}}
\providecommand{\arxiv}[1]{\href{https://arxiv.org/abs/#1}{arXiv~#1}}
\providecommand{\doi}[2]{\href{https://doi.org/#1}{#2}}

\bigskip

\end{document}